\newtheorem{prop}{Proposition}
\newtheorem{theorem}{Theorem}
\newtheorem{lem}{Lemma}
\theoremstyle{definition}
\newcommand{\N}{\mathbb{N}}
\newcommand{\C}{\mathbb{C}}
\newcommand{\F}{\mathbb{F}}
\renewcommand{\epsilon}{\varepsilon}
\renewcommand{\hat}{\widehat}
\DeclareMathOperator{\Qres}{Q}
\newcommand{\Qcc}{\Qres^{*\text{\normalfont cc}}}
\newcommand{\Qoip}{\Qres^{\text{\normalfont OIP}}}
\newcommand{\suppress}[1]{}
\DeclareMathOperator*{\E}{E}
\title{Quantum Communication-Query Tradeoffs}
\author{William M. Hoza\thanks{This material is based upon work supported by the National Science Foundation Graduate Research Fellowship under Grant No. DGE-1610403.}\\Department of Computer Science,\\University of Texas at Austin\\\texttt{whoza@utexas.edu}}
\begin{document}
	\maketitle
	
	\begin{abstract}
		For any function $f: X \times Y \to Z$, we prove that
		\[
			\Qcc(f) \cdot \Qoip(f) \cdot (\log \Qoip(f) + \log |Z|) \geq \Omega(\log |X|).
		\]
		Here, $\Qcc(f)$ denotes the bounded-error communication complexity of $f$ using an entanglement-assisted two-way qubit channel, and $\Qoip(f)$ denotes the number of quantum queries needed to learn $x$ with high probability given oracle access to the function $f_x(y) \stackrel{\text{def}}{=} f(x, y)$. We show that this tradeoff is close to the best possible. We also give a generalization of this tradeoff for \emph{distributional} query complexity.
		
		As an application, we prove an optimal $\Omega(\log q)$ lower bound on the $\Qcc$ complexity of determining whether $x + y$ is a perfect square, where Alice holds $x \in \F_q$, Bob holds $y \in \F_q$, and $\F_q$ is a finite field of odd characteristic. As another application, we give a new, simpler proof that searching an ordered size-$N$ database requires $\Omega(\log N / \log \log N)$ quantum queries. (It was already known that $\Theta(\log N)$ queries are required.)
	\end{abstract}

	\section{Introduction}
	Let $f: X \times Y \to Z$ be a function, where $X, Y, Z$ are finite sets. In this paper, we study quantum algorithms for two computational problems associated with $f$.
	
	\paragraph{The communication problem.} In this problem, Alice has $x \in X$, Bob has $y \in Y$, and the goal is for Bob to output $f(x, y)$ with as little communication between Alice and Bob as possible. Three different models have been considered for exploiting the laws of quantum mechanics to assist with this task:
	\begin{itemize}
		\item We can give Alice and Bob a (two-way) \emph{qubit channel}, i.e. we allow them to exchange qubits instead of classical bits.
		\item We can allow Alice and Bob to share an entangled state of arbitrary finite dimension that does not depend on their inputs.
		\item We can give Alice and Bob both capabilities.
	\end{itemize}
	A qubit channel can be simulated with factor-$2$ overhead using an entanglement-assisted classical channel via teleportation \cite{bbc+93}. On the other hand, no simulation in the other direction is known; there might be some functions $f$ for which entanglement is more useful than a qubit channel. In this paper, we consider the most powerful model, where Alice and Bob have an entanglement-assisted qubit channel. Let $\Qcc_{\epsilon}(f)$ be the minimum number of qubits exchanged in any protocol in this model for computing $f$ with failure probability at most $\epsilon \geq 0$. As shorthand, define $\Qcc(f) = \Qcc_{1/3}(f)$.
	
	\paragraph{The oracle identification problem.} In this problem, there is an unknown value $x \in X$, and the goal is to determine $x$ using as few queries as possible to an oracle for the function $f_x(y) \stackrel{\text{def}}{=} f(x, y)$. For example, ``hidden shift'' problems like the Bernstein-Vazirani problem \cite{bv97} fit into this framework. Classically, for any $f$, at least $\frac{\log |X|}{\log |Z|}$ queries are required, for the trivial reason that each query gives at most $\log |Z|$ bits of information about $x$. But we will allow quantum queries, i.e. the algorithm can apply $U_{f_x}$, the unitary operator on $\C^{|Y|} \otimes \C^{|Z|}$ defined by
	\[
		U_{f_x} \ket{y} \ket{a} = \ket{y} \ket{a + f(x, y) \text{ mod } |Z|}.
	\]
	Let $\Qoip_{\epsilon}(f)$ be the minimum number of queries made by an algorithm with failure probability at most $\epsilon \geq 0$. As shorthand, define $\Qoip(f) = \Qoip_{1/3}(f)$. Note that we allow for the possibility that $f_x \equiv f_{x'}$ for some $x \neq x'$. In this degenerate case, $\Qoip(f) = \infty$.
	
	\subsection{Main result}
	In this paper, we show that $\Qoip(f)$ and $\Qcc(f)$ cannot \emph{both} be tiny:
	\begin{theorem} \label{thm:main}
		For any\footnote{In the degenerate case that $f$ does not depend on $x$, $\Qcc(f) = 0$ and $\Qoip(f) = \infty$; the left-hand side of \cref{eqn:main} should be interpreted as evaluating to $\infty$ in this case. Similar remarks apply to \cref{thm:zero-error,thm:distributional-tradeoff,lem:distributional-tradeoff-before-amplification}.} function $f: X \times Y \to Z$,
		\begin{equation} \label{eqn:main}
		\Qcc(f) \cdot \Qoip(f) \cdot (\log \Qoip(f) + \log |Z|) \geq \Omega(\log |X|).
		\end{equation}
	\end{theorem}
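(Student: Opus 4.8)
The plan is to combine a cheap communication protocol for $f$ with a cheap oracle‑identification algorithm into a single communication protocol in which Bob simply reconstructs Alice's input $x$, and then to invoke the (tight‑up‑to‑a‑constant) fact that Bob learning $x$ forces $\Omega(\log |X|)$ qubits of communication even with shared entanglement. Write $c=\Qcc(f)$ and $q=\Qoip(f)$, and assume $q<\infty$ (otherwise the left‑hand side of \cref{eqn:main} is $\infty$ by convention); note that then $c\geq 1$ and $|Z|\geq 2$, since $f$ must depend on $x$. Fix a $q$‑query quantum algorithm $\mathcal{A}$ that, given oracle access to $U_{f_x}$ for an unknown $x$, outputs $x$ with probability at least $2/3$. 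Bob will run $\mathcal{A}$ himself --- ignoring his own input $y$ entirely --- answering each of $\mathcal{A}$'s oracle calls by running the communication protocol for $f$ jointly with Alice, so that at the end Bob holds a good guess for $x$ (from which, incidentally, $f(x,y)$ is then trivially computable).

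The heart of the matter is simulating one coherent query to $U_{f_x}$ by communication. First I would amplify the protocol for $f$: running $\Theta(\log(1/\delta))$ independent copies with fresh shared entanglement and coherently taking the majority of their $Z$‑valued outputs computes, for every fixed $(x,y)$, the value $f(x,y)$ with error at most $\delta$, using $O(c\log(1/\delta))$ qubits of communication. Using the standard ``compute, copy, uncompute'' trick --- with the final measurement deferred --- I would turn this into a unitary subroutine that Alice and Bob can run on superpositions of Bob's nominal input: on a state $\sum_y \alpha_y\ket{y}\ket{a}$, run the amplified protocol into an ancilla, apply $\ket{a}\mapsto\ket{a+(\text{ancilla})\bmod|Z|}$ locally on Bob's side, then run the amplified protocol in reverse, which cleans the ancilla and also restores Alice's workspace, the channel, and this query's block of entanglement so the next query can proceed. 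Because the branches of the superposition are indexed by the mutually orthogonal vectors $\ket{y}$, the errors from different branches do not add coherently, and a short calculation shows the net operation is within $O(\sqrt\delta)$ of $U_{f_x}$ in operator norm. Taking $\delta=\Theta(1/q^2)$ with a small enough constant, the errors of the $q$ simulated queries accumulate to an arbitrarily small constant, so Bob still recovers $x$ with probability at least $0.6$, and the total communication is $q\cdot O(c\log(1/\delta))=O\big(c\,q\,(\log q+1)\big)$ qubits.

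To finish, run this protocol with $x$ uniform over $X$. Since Bob's guess equals $x$ with probability at least $0.6$, Fano's inequality shows the mutual information between $X$ and Bob's final register is $\Omega(\log|X|)$; on the other hand, the entanglement‑assisted Holevo bound (shared entanglement saves at most a factor of two, as witnessed by superdense coding) bounds this mutual information by twice the number of qubits sent from Alice to Bob, hence by $O$ of the total communication. Therefore $c\,q\,(\log q+1)\geq\Omega(\log|X|)$, and since $\log q+1\leq\log q+\log|Z|$ (as $|Z|\geq 2$) this gives \cref{eqn:main}. The intermediate bound $\Qcc(f)\cdot\Qoip(f)\cdot(\log\Qoip(f)+1)\geq\Omega(\log|X|)$ is in fact a hair stronger; the $\log|Z|$ version is the one I would expect to survive the distributional argument, where the oracle‑identification error is measured against a fixed distribution on $X$.

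I expect the delicate step to be the query simulation in the second paragraph: compiling an interactive, bounded‑error, entanglement‑assisted protocol into a \emph{clean} unitary subroutine that approximates the ideal oracle uniformly well. The points needing care are that amplification must be done coherently (one cannot measure the $\Theta(\log(1/\delta))$ runs and take a classical majority --- the majority must be computed reversibly and then the runs un‑computed), that Alice's registers, the channel, and each query's entanglement must be restored so the $q$ calls can be chained, and that the per‑query error stays $O(\sqrt\delta)$ rather than blowing up with $|Y|$. Everything else --- the amplification accounting, Fano's inequality, and the entanglement‑assisted Holevo bound --- is routine.
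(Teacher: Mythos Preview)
Your approach is essentially the paper's: amplify the $\Qcc$ protocol, run compute--copy--uncompute to make it approximately clean, have Bob simulate each of the $q$ oracle calls of the $\Qoip$ algorithm via this clean protocol, and finish with a Holevo-type bound. The paper uses Nayak--Salzman rather than Fano plus Holevo, but notes explicitly that the mutual-information bound of Cleve et al.\ already suffices for \cref{thm:main}; and the paper postpones the amplification to the last step (bounding $\Qcc_\epsilon$ in terms of $\Qcc$) rather than doing it up front, but that is only a reordering.

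There is one oversight in your error analysis. You analyze the simulated oracle only on states of the form $\sum_y \alpha_y\ket{y}\ket{a}$ with the answer register $a$ \emph{fixed}, and there your $O(\sqrt{\delta})$ bound is indeed correct. But the $\Qoip$ algorithm may query in superposition over $a$ as well, and the error vectors for different $a$ are \emph{not} orthogonal (only the $\ket{y}$ factor is guaranteed to survive, not an $\ket{a}$ factor). Handling a general query state $\sum_{y,a}\alpha_{y,a}\ket{y}\ket{a}$ introduces a factor polynomial in $|Z|$ into the per-query error, forcing $\delta=\Theta\bigl((q\,|Z|)^{-O(1)}\bigr)$ and hence $\log(1/\delta)=O(\log q+\log|Z|)$; this is exactly where the $\log|Z|$ term in \cref{thm:main} comes from in the paper's analysis. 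Consequently your stronger intermediate claim $c\,q\,(\log q+1)\geq\Omega(\log|X|)$ is not established by your argument (whether the $\log|Z|$ term is necessary is listed as an open problem in the paper). Since you pass to $\log q+\log|Z|$ at the end anyway, \cref{thm:main} itself is recovered once this is corrected.
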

	A consequence of \cref{thm:main} is that an \emph{upper} bound on one of the two complexity measures, $\Qoip(f)$ or $\Qcc(f)$, implies a \emph{lower} bound on the other complexity measure. This technique can prove very strong communication complexity lower bounds: if $\Qoip(f) \leq O(1)$, then \cref{thm:main} implies that $\Qcc(f) \geq \Omega(\log |X|)$, which is trivially optimal for any $f$. On the other hand, this technique has limited value for proving query complexity lower bounds: even if $\Qcc(f) \leq O(1)$, \cref{thm:main} only gives the lower bound $\Qoip(f) \geq \Omega\left(\frac{\log |X|}{\log |Z| \log \log |X|}\right)$, whereas in actual fact, $\Qoip(f)$ might be much larger. For example, $\Qoip(\mathsf{EQ}) \geq \Omega(\sqrt{|X|})$; this is the familiar fact that Grover's algorithm \cite{gro96} is optimal, proven by Bennett et al. \cite{bbbv97}.
	
	\subsection{Proof overview}
	
	\cref{thm:main} generalizes the technique used by Cleve et al. \cite{cdnt13} to prove that $\Qcc(\mathsf{IP}) \geq \Omega(n)$. In our terminology, Cleve et al. used the fact that $\Qoip(\mathsf{IP}) = 1$ by the Bernstein-Vazirani algorithm \cite{bv97, cemm98}. The core of the proof is a lower bound on the communication required to transmit classical data over an entanglement-assisted two-way qubit channel. Based on Holevo's theorem \cite{hol73}, Cleve et al. proved such a bound \cite[Theorem 1]{cdnt13} in terms of mutual information, a measure of the amount of data transmitted \emph{in expectation}. We will use a similar bound (\cref{thm:ns02}) by Nayak and Salzman for transmitting data \emph{with high probability}. Cleve et al.'s bound would be sufficient for \cref{thm:main}, but Nayak and Salzman's bound allows us to derive a strengthening of \cref{thm:main} for the \emph{distributional} version of $\Qoip(f)$, i.e. the case when the unknown value $x$ is drawn from a known distribution.
	
	Given that bound, the proof of \cref{thm:main} is very simple. To transmit $x \in X$ from Alice to Bob, Bob executes the $\Qoip(f)$ algorithm, and the two of them simulate each query using the $\Qcc_{\epsilon}(f)$ protocol. If $\epsilon$ is small enough, the transmission is successful with constant probability, which implies that the total number of qubits sent from Alice to Bob, $\Qoip(f) \cdot \Qcc_{\epsilon}(f)$, must be at least $\frac{1}{2} \log |X| - O(1)$. The final bound in \cref{thm:main} comes from the standard amplification bound $\Qcc_{\epsilon}(f) \leq O(\Qcc(f) \cdot \log(1/\epsilon))$.

	\subsection{Related work}
	
	Entanglement-assisted communication complexity was introduced by Cleve and Buhrman \cite{cb97}. There is a sizable body of research on proving lower bounds for $\Qcc$ \cite{bw01, raz03, mw07, sz09, ls09a, ls09b, kw13, bblv13, cdnt13}. Two papers are especially relevant to the present work. First, as previously mentioned, our technique generalizes that of Cleve et al. \cite{cdnt13}. Second, Montanaro and Winter \cite{mw07} gave a different generalization of the argument by Cleve et al. by considering a specific family of one-query algorithms to replace the Bernstein-Vazirani algorithm.
	
	Oracle identification problems were first studied in generality by Ambainis et al. \cite{aik+04}. One line of research on oracle identification problems \cite{aik+04, aik+07, ain+09, kot14} has focused on the \emph{worst-case $f$}, i.e. the quantity
	\[
			\mathrm{OIP}(X, Y, Z) \stackrel{\text{def}}{=} \max_{f: X \times Y \to Z}\Qoip(f),
	\]
	where the maximum ranges over functions $f$ such that $\Qoip(f) < \infty$. Kothari proved optimal bounds on $\mathrm{OIP}(X, Y, \{0, 1\})$ \cite{kot14}. In contrast, our tradeoff theorem provides a technique for proving lower bounds on $\Qoip(f)$ for \emph{specific} functions $f$.
	
	In another line of research \cite{sg04, as05, hmp+10}, oracle identification problems have been studied under the alternate name \emph{exact learning from membership queries}. Servedio and Gortler \cite{sg04} showed that for any $f: X \times Y \to \{0, 1\}$,
	\begin{equation} \label{eqn:sg04}
		\Qoip(f) \geq \Omega\left(\frac{\log |X|}{\log |Y|} + \sqrt{1/\gamma(f)}\right),
	\end{equation}
	where $\gamma(f)$ is a certain combinatorial parameter satisfying $2 \leq 1/\gamma(f) \leq |Y| + 1$. This bound is incomparable with ours. On the one hand, \cref{eqn:sg04} only gives a trivial $\Omega(1)$ lower bound for searching an ordered size-$N$ database, whereas we will show in \cref{sec:gt} that our tradeoff implies a near-optimal $\Omega(\log N / \log \log N)$ lower bound. On the other hand, \cref{eqn:sg04} gives the optimal $\Omega(\sqrt{N})$ lower bound for \emph{unordered} search, whereas our tradeoff only gives a weak $\Omega(\log N / \log \log N)$ lower bound.
	
	Ours is far from the first theorem relating query complexity and communication complexity. Many researchers have shown how to derive communication complexity lower bounds from \emph{lower} bounds on the query complexity of related functions \cite{rm99, gpw15, goo15, glm+16, drnv16, hhl16, goo16, wyy17, cklm17, gkpw17, gpw17, agj+17}. But to the best of our knowledge, we are the first to explicitly observe a \emph{tradeoff} between communication complexity and query complexity.
	
	We defer discussion of previous work related to our applications to \cref{sec:applications}.

	\subsection{Outline of this paper}
	In \cref{sec:zero-error}, we prove a version of \cref{thm:main} for \emph{exact} communication protocols and query algorithms, i.e. for the case $\epsilon = 0$. Then, in \cref{sec:nonzero-error}, we prove \cref{thm:main} and a generalization of \cref{thm:main} for the distributional version of $\Qoip$. In \cref{sec:applications}, we discuss applications of our tradeoff theorems. Finally, in \cref{sec:optimality}, we show that \cref{thm:main} and our tradeoff theorem for the exact case are near-optimal, thereby demonstrating the limitations of our lower bound techniques.

	\section{Tradeoff theorem for the exact case} \label{sec:zero-error}
	We begin with the exact case, where the bound is improved and the proof is very simple.
	\begin{theorem} \label{thm:zero-error}
		For any $f: X \times Y \to Z$,
		\[
			\Qcc_0(f) \cdot \Qoip_0(f) \geq \frac{1}{2} \log |X|.
		\]
	\end{theorem}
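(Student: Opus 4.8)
The plan is to use the reduction sketched in the proof overview, but carried out carefully enough that the constant $\tfrac12$ survives. Write $Q = \Qoip_0(f)$ and $C = \Qcc_0(f)$. If two distinct inputs $x \neq x'$ induce the same oracle ($f_x \equiv f_{x'}$) then $Q = \infty$ and there is nothing to prove, so assume $Q$ is finite ($C$ is always finite, since Alice can just send $x$). I will exhibit an entanglement-assisted protocol in which Alice transmits her input $x \in X$ to Bob with zero error while sending \emph{exactly} $Q \cdot C$ qubits from Alice to Bob. Theorem~\ref{thm:zero-error} then follows from the converse to superdense coding for two-way entanglement-assisted qubit channels — the Holevo-based bound underlying Cleve et al.~\cite{cdnt13}, and a special case of \cref{thm:ns02} — which says that transmitting one of $|X|$ equally likely messages with certainty forces the sender-to-receiver qubit count to be at least $\tfrac12 \log|X|$.

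For the reduction, fix an exact $Q$-query oracle identification algorithm $\mathcal A$ (whose output register, when measured at the end, yields $x$) and an exact $C$-qubit protocol $\mathcal P$ for $f$ (after which Bob holds $f(x,y)$). Bob runs $\mathcal A$ himself, keeping its query register $\C^{|Y|} \otimes \C^{|Z|}$ and all workspace local; the only nontrivial step is simulating an application of $U_{f_x}$. To do this I first turn $\mathcal P$ into a clean coherent subroutine. Since Alice's input $x$ is a fixed value and $\mathcal P$ is exact, for each fixed $y$ the unitary part of $\mathcal P$ induces a fixed operator $\mathcal P_{x,y}$ on Bob's registers, the channel, the shared entangled state, and Alice's workspace that deposits $\ket{f(x,y)}$ in Bob's output register; running $\mathcal P$, copying its output into a fresh register, and then running $\mathcal P^{-1}$ therefore implements — coherently over a superposition of $y$'s — the clean map $\ket{y}\ket{0} \mapsto \ket{y}\ket{f(x,y)}$ while restoring the channel, the entanglement, and Alice to their initial states. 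Interposing the modular adder $\ket{a}\ket{b} \mapsto \ket{a + b \bmod |Z|}\ket{b}$ (from the query register's $\C^{|Z|}$ part into the fresh output register) between the forward run and the reverse run makes the net effect on Bob's query register exactly $\ket{y}\ket{a} \mapsto \ket{y}\ket{a + f(x,y) \bmod |Z|}$, i.e.\ exactly $U_{f_x}$.

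It remains to count Alice-to-Bob qubits. If a forward execution of $\mathcal P$ sends $S$ of its $C$ communicated qubits from Alice to Bob, then in the reversed execution $\mathcal P^{-1}$ every Alice-to-Bob message becomes a Bob-to-Alice message and conversely, so it sends the remaining $C - S$ qubits from Alice to Bob, a total of exactly $C$ per simulated query. Over the $Q$ queries of $\mathcal A$ this is $Q \cdot C$ qubits from Alice to Bob, and no other part of the procedure sends qubits in that direction; when Bob finishes running $\mathcal A$ and measures its output register he recovers $x$ with certainty. Applying the superdense-coding converse gives $Q \cdot C \geq \tfrac12 \log|X|$, as claimed.

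I expect the only delicate point to be the coupling of the last two steps: making precise that the clean coherent version of $\mathcal P$ really implements $U_{f_x}$ on Bob's register — which relies on exactness, so that there is no leftover entanglement with Alice or the channel after the uncomputation — and the bookkeeping that the forward-plus-reverse pass costs exactly $C$, not $2C$, qubits toward Alice, since the constant $\tfrac12$ is stated with no slack. Both become routine once set up correctly, and this is precisely where the hypothesis $\epsilon = 0$ and the two-way nature of the channel are used.
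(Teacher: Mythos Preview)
Your proposal is correct and follows essentially the same route as the paper: turn the exact protocol into a clean one via compute--copy--uncompute (the paper's \cref{lem:clean}), have Bob run the $\Qoip_0$ algorithm simulating each query with this clean protocol, and invoke the Holevo-based bound of Cleve et al. You have also identified exactly the key bookkeeping point the paper uses --- that the forward run contributes the Alice-to-Bob cost of $\mathcal P$ while the reversed run contributes its Bob-to-Alice cost, so the clean subroutine sends only $C$ (not $2C$) qubits in the Alice-to-Bob direction.
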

	To prove \cref{thm:zero-error}, we closely follow the analysis by Cleve et al. \cite{cdnt13}. Recall that a $\Qcc$ protocol operates on four registers: Alice and Bob's input registers; a shared register initially in a bipartite state $\ket{\phi}$ that does not depend on the inputs and that also serves as the players' ancilla registers; and an output register of dimension $|Z|$, initially in the state $\ket{0}$, held by Bob. Suppose $U$ is a unitary operator on $\C^{|X|} \otimes \C^{|Y|} \otimes \C^d \otimes \C^{|Z|}$. We say that $U$ \emph{cleanly computes $f$} if there is some state $\ket{\phi} \in \C^d$ such that for every $x \in X, y \in Y, a \in Z$,
	\[
		U \ket{x} \ket{y} \ket{\phi} \ket{a} = \ket{x} \ket{y} \ket{\phi} \ket{a + f(x, y) \text{ mod } |Z|}.
	\]
	Identify each $\Qcc$ protocol with the unitary operator\footnote{This is well-defined, because the $\Qcc$ model does not allow intermediate measurements, i.e. a $\Qcc$ protocol only involves applying unitary operators and exchanging qubits.} implemented by the protocol. The standard ``copy and uncompute'' trick proves the following lemma:
	
	\begin{lem} \label{lem:clean}
		For any function $f: X \times Y \to Z$, there is a communication protocol $U$ in which Alice sends Bob at most $\Qcc_0(f)$ qubits such that $U$ cleanly computes $f$.
	\end{lem}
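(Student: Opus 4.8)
The plan is to apply the standard ``copy and uncompute'' construction to an optimal exact protocol. Fix a $\Qcc$ protocol $V$ for $f$ with failure probability $0$ that exchanges a total of $\Qcc_0(f)$ qubits, say $a$ of them sent Alice-to-Bob and $b$ of them sent Bob-to-Alice, so $a + b = \Qcc_0(f)$, and identify $V$ with its protocol unitary, with initial shared state $\ket{\phi}$. The first thing I would record is that the zero-error hypothesis forces a product structure on the output: measuring the $|Z|$-dimensional output register after applying $V$ to $\ket{x}\ket{y}\ket{\phi}\ket{0}$ yields $f(x, y)$ with probability exactly $1$, so that state has vanishing amplitude on every output-register basis vector $\ket{c}$ with $c \neq f(x, y)$, whence
\[
	V \ket{x}\ket{y}\ket{\phi}\ket{0} = \ket{\eta_{x,y}} \otimes \ket{f(x,y)}
\]
for some normalized $\ket{\eta_{x,y}}$ supported on the input and shared registers.

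Next I would let $U$ act on Alice's and Bob's input registers, a shared register, and a $|Z|$-dimensional output register $Z$ held by Bob; internally, besides $V$'s own shared register, the shared register carries a fresh $|Z|$-dimensional ancilla $W$ held by Bob, which will serve as $V$'s output register, so the input-independent shared state is $\ket{\phi}' = \ket{\phi} \otimes \ket{0}_W$. Define $U$ as the composition of three stages: (i) run $V$, with $W$ as $V$'s output register; (ii) have Bob apply the local controlled-addition $\ket{c}_Z \ket{c'}_W \mapsto \ket{c + c' \bmod |Z|}_Z \ket{c'}_W$; (iii) run $V^{-1}$. Tracking the action on $\ket{x}\ket{y}\ket{\phi}\ket{a}_Z\ket{0}_W$: stage (i) produces $\ket{\eta_{x,y}}\ket{a}_Z\ket{f(x,y)}_W$; stage (ii) produces $\ket{\eta_{x,y}}\ket{a + f(x,y)}_Z\ket{f(x,y)}_W$; and stage (iii), since $V^{-1}$ touches only the registers carrying $\ket{\eta_{x,y}}$ together with $W$, and sends $\ket{\eta_{x,y}}\ket{f(x,y)}_W$ back to $\ket{x}\ket{y}\ket{\phi}\ket{0}_W$, produces $\ket{x}\ket{y}\ket{\phi}\ket{a + f(x,y)}_Z\ket{0}_W$. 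Folding $\ket{0}_W$ back into $\ket{\phi}'$, this is exactly the clean-computation identity, and it holds uniformly in $x, y, a$; since each of the three stages consists only of local unitaries and qubit exchanges, $U$ is a genuine $\Qcc$ protocol unitary with shared state $\ket{\phi}'$.

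It remains to count the communication. Stage (ii) is entirely local to Bob. Running a protocol in reverse interchanges its two directions of communication, so $V^{-1}$ sends $b$ qubits Alice-to-Bob and $a$ qubits Bob-to-Alice; hence the total number of qubits that Alice sends to Bob over the course of $U$ is $a + b = \Qcc_0(f)$, as claimed. I do not expect a serious obstacle; the two points needing care are the passage from the zero-error hypothesis to the clean product form $\ket{\eta_{x,y}} \otimes \ket{f(x,y)}$, and the bookkeeping showing that reversing $V$ swaps its two communication directions, so that it is specifically the \emph{Alice-to-Bob} count --- not merely the total --- that comes out equal to $\Qcc_0(f)$.
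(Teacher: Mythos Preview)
Your proof is correct and follows essentially the same ``copy and uncompute'' construction as the paper: run the optimal zero-error protocol with a fresh ancilla as its output register, copy into the true output register, then uncompute by running the protocol in reverse. You are more explicit than the paper about the product structure $\ket{\eta_{x,y}} \otimes \ket{f(x,y)}$ forced by the zero-error hypothesis and about the fact that reversing the protocol swaps the two directions of communication, but the argument and the resulting Alice-to-Bob count $a + b = \Qcc_0(f)$ are the same.
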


	\begin{proof}
		Let $U_0$ be a protocol for $f$ in which at most $\Qcc_0(f)$ qubits are exchanged. Let $V$ be the unitary operator on $\C^{|Z|} \otimes \C^{|Z|}$ defined by
		\[
			V: \ket{z} \ket{a} \mapsto \ket{z} \ket{a + z \text{ mod } |Z|}.
		\]
		In $U$, Bob has an extra $|Z|$-dimensional ancilla register as compared to $U_0$. The protocol $U$:
		\begin{enumerate}
			\item Execute $U_0$, with the new ancilla register taking the place of the output register of $U_0$.
			\item Apply $V$ to copy the result into the actual output register.
			\item Execute $U_0^{-1}$ to restore the registers to their initial state (other than the output register).
		\end{enumerate}
		The number of qubits sent from Alice to Bob in this protocol is the number of qubits sent from Alice to Bob in $U_0$, plus the number of qubits sent from Bob to Alice in $U_0$. This is at most $\Qcc_0(f)$.
	\end{proof}

	\begin{proof}[Proof of \cref{thm:zero-error}]
		We give a protocol for sending an arbitrary element $x \in X$ from Alice to Bob. Bob executes the $\Qoip_0(f)$ protocol, simulating each query by executing the clean communication protocol for $f$. The total number of qubits sent from Alice to Bob in this protocol is only $\Qcc_0(f) \cdot \Qoip_0(f)$. But by \cite[Theorem 1]{cdnt13} (a consequence of Holevo's theorem \cite{hol73}), the number of qubits sent from Alice to Bob must be at least $\frac{1}{2} \log |X|$.
	\end{proof}
	
	\section{Tradeoff theorems for the bounded-error case} \label{sec:nonzero-error}
	
	\subsection{Smooth max-entropy}
	We will formulate our distributional communication-query tradeoff in terms of \emph{smooth max-entropy}. If $x$ and $y$ are random variables on the same measurable space, we write $x \sim_{\epsilon} y$ if $x$ and $y$ are $\epsilon$-close in total variation distance. Recall that if $x$ is a random variable and $\epsilon > 0$, the \emph{$\epsilon$-smooth max-entropy} of $x$, denoted $H_{\text{max}}^{\epsilon}(x)$, is defined by
	\begin{equation}
		H_{\text{max}}^{\epsilon}(x) = \min\{\log_2 |\text{supp}(y)| : x \sim_{\epsilon} y\}.
	\end{equation}
	Equivalently, $2^{H_{\text{max}}^{\epsilon}(x)}$ is the size of the smallest set $S$ such that $\Pr[x \in S] \geq 1 - \epsilon$. Operationally, $\lceil H_{\text{max}}^{\epsilon}(x) \rceil$ is the \emph{worst-case} number of bits needed to transmit $x$ over a classical channel with $\epsilon$ probability of failure.
	
	\subsection{Distributional communication-query tradeoff}
	For a function $f: X \times Y \to Z$, a distribution $\mu$ on $X$, and $\epsilon > 0$, let $\Qoip_{\mu, \epsilon}(f)$ be the minimum number of queries of any quantum algorithm that outputs $x$ with failure probability at most $\epsilon$ given oracle access to $f_x(y) \stackrel{\text{def}}{=} f(x, y)$, where the randomness includes both the internal randomness of the algorithm and a random draw of $x$ from distribution $\mu$. Clearly, for every distribution $\mu$, $\Qoip_{\mu, \epsilon}(f) \leq \Qoip_{\epsilon}(f)$. The distributional strengthening of \cref{thm:main}:
	\begin{theorem} \label{thm:distributional-tradeoff}
		For every $f: X \times Y \to Z$ and every distribution $\mu$ on $X$,
		\[
		\Qcc(f) \cdot \Qoip_{\mu, 1/3}(f) \cdot (\log \Qoip_{\mu, 1/3}(f) + \log |Z|) \geq \Omega(H_\text{\normalfont max}^{1/2}(\mu)).
		\]
	\end{theorem}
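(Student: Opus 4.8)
The plan is to reuse the reduction from the proof of \cref{thm:zero-error}, but in the bounded-error regime, and to replace the Holevo-based bound of Cleve et al.\ by the Nayak--Salzman bound \cref{thm:ns02}: since that bound concerns transmitting data \emph{with high probability}, it delivers a lower bound in terms of $H_{\text{max}}^{1/2}(\mu)$ rather than $\log|X|$. To send a random $x \sim \mu$ from Alice to Bob, Bob runs an optimal query algorithm $\mathcal{A}$ witnessing $\Qoip_{\mu, 1/3}(f)$ --- so $\mathcal{A}$ makes $q := \Qoip_{\mu, 1/3}(f)$ queries to $U_{f_x}$ and outputs $x$ with probability at least $2/3$ over $x \sim \mu$ and its internal randomness --- and Alice and Bob cooperatively simulate each query with Alice's communication input fixed to $x$. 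I would isolate the bulk of the argument as an unamplified statement --- presumably \cref{lem:distributional-tradeoff-before-amplification}, roughly of the form $\Qoip_{\mu, 1/3}(f) \cdot \Qcc_{\delta}(f) \geq \Omega(H_{\text{max}}^{1/2}(\mu))$ for $\delta$ inverse-polynomially small in $q$ and $|Z|$ --- and then obtain the theorem from the standard amplification bound $\Qcc_{\delta}(f) \leq O(\Qcc(f) \cdot \log(1/\delta))$.

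Because $\mathcal{A}$ only succeeds with probability $2/3$ to begin with, the simulated queries may be approximate. First, standard amplification produces, for any $\delta > 0$, a protocol that uses $O(\Qcc(f) \cdot \log(1/\delta))$ qubits of communication and computes $f(x, y)$ with error at most $\delta$ on every input. Applying an approximate version of the copy-and-uncompute trick of \cref{lem:clean} to this protocol yields a communication procedure $\tilde{U}$, sending $O(\Qcc(f) \cdot \log(1/\delta))$ qubits from Alice to Bob, that acts on $\C^{|Y|} \otimes \C^{|Z|}$ (together with the shared register and work ancillas) approximately as the clean oracle $\ket{y}\ket{a} \mapsto \ket{y}\ket{a + f(x, y) \text{ mod } |Z|}$. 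The delicate point is the size of the error. Every basis state $\ket{y}\ket{a}$ is handled up to Euclidean error $O(\sqrt{\delta})$; moreover, a communication protocol never disturbs the players' input registers, so $\tilde{U}$ leaves the $\ket{y}$ register intact and the error vectors attached to distinct values of $y$ are orthogonal. A Cauchy--Schwarz estimate over the remaining $|Z|$-dimensional register then shows that $\tilde{U}$ approximates the true oracle to within $O(\sqrt{|Z| \cdot \delta})$ in operator norm, on the subspace where $\tilde{U}$'s ancillas are correctly initialized. (A careless analysis would only give $O(\sqrt{|Y| \cdot |Z| \cdot \delta})$, which is too weak.) This $|Z|$ is the origin of the $\log|Z|$ term in the theorem.

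With that estimate in hand, a hybrid argument bounds the total variation distance between the output of $\mathcal{A}$ with true queries and with simulated queries by $O(q \sqrt{|Z| \cdot \delta})$. Taking $\delta = \Theta(1/(q^2 |Z|))$ makes this at most $1/100$, so the combined protocol lets Bob recover $x$ with probability at least $2/3 - 1/100 > 1/2$ over $x \sim \mu$, while the number of qubits sent from Alice to Bob is at most
\[
q \cdot O(\Qcc(f) \cdot \log(1/\delta)) = O\left(\Qcc(f) \cdot \Qoip_{\mu, 1/3}(f) \cdot (\log \Qoip_{\mu, 1/3}(f) + \log|Z|)\right).
\]
On the other hand, \cref{thm:ns02} asserts that any protocol over an entanglement-assisted two-way qubit channel by which Bob recovers $x \sim \mu$ with probability exceeding $1/2$ must send $\Omega(H_{\text{max}}^{1/2}(\mu))$ qubits from Alice to Bob. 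Comparing these two bounds yields the theorem.

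The step I expect to be the real obstacle is the operator-norm analysis of the simulated oracle: one must confirm that the only polynomial blow-up over the per-basis-state error $O(\sqrt{\delta})$ is the unavoidable $\sqrt{|Z|}$ coming from the oracle's answer register --- which is exactly what produces the $\log|Z|$ factor --- and that the input-register-preserving structure of communication protocols suppresses the potential $\sqrt{|Y|}$ (and $\sqrt{|X|}$) factors. The only other ingredient that does real work is having \cref{thm:ns02} stated for high-probability recovery from a distribution; Cleve et al.'s mutual-information bound would suffice for \cref{thm:main}, but the refinement in terms of $H_{\text{max}}^{1/2}(\mu)$ requires the Nayak--Salzman formulation.
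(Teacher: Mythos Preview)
Your proposal is correct and follows essentially the same route as the paper: Bob runs the $\Qoip_{\mu,1/3}$ algorithm, each query is simulated by an approximately-clean version of the communication protocol (copy-and-uncompute), the orthogonality of the error vectors across $y$ kills the potential $\sqrt{|Y|}$ blowup, a hybrid argument bounds the cumulative error, and Nayak--Salzman (\cref{thm:ns02}) converts the resulting transmission protocol into the $H_{\text{max}}^{1/2}(\mu)$ lower bound, with amplification supplying the $(\log \Qoip_{\mu,1/3}(f)+\log|Z|)$ factor. Two small remarks: your per-query error estimate $O(\sqrt{|Z|\,\delta})$ is in fact tighter than the paper's $O(|Z|^2\sqrt{\epsilon})$ (the paper uses a cruder triangle inequality both in bounding $\|\ket{\text{error}_{x,y,a}}\|$ and in summing over $a$), though both yield the same final bound; and the statement that ``a communication protocol never disturbs the players' input registers'' is not automatic---the paper arranges it explicitly by first copying the inputs into ancillas---so you should say you modify the protocol accordingly rather than assume it.
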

	Toward proving \cref{thm:distributional-tradeoff}, we begin with a Holevo-style theorem by Nayak and Salzman \cite{ns02}. Nayak and Salzman only formally stated this theorem for the case that $\mu$ is uniform, where their result gives a better dependence on the error $\epsilon$ than \cite[Theorem 1]{cdnt13}. Nayak and Salzman alluded to the non-uniform case, which follows easily from their analysis.
	\begin{theorem}[{\cite{ns02}}] \label{thm:ns02}
		Let $\mu$ be a distribution on a finite set $X$. Suppose there is a protocol that allows Alice to send Bob an element $x \sim \mu$ using an entanglement-assisted two-way qubit channel with failure probability $\epsilon$. (The failure probability is over the randomness of $x$ and the randomness of the protocol.) Then the number of qubits sent from Alice to Bob in this protocol is at least $\frac{1}{2} H_\text{\normalfont max}^{\epsilon}(\mu)$.
	\end{theorem}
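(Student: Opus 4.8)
The plan is to follow the argument of Nayak and Salzman \cite{ns02}, which is the one-shot (max-entropic) refinement of the Holevo-theorem argument that Cleve et al.\ \cite{cdnt13} use for the mutual-information version, and to observe that nothing in it uses the uniformity of $\mu$. Model the protocol as acting on Alice's register, Bob's register, and an input-independent shared entangled ancilla, and adjoin a classical register $X$ recording $x \sim \mu$. Let $B_t$ denote the totality of Bob's registers after $t$ communication rounds. I would track a potential $\Phi_t$ equal to a suitably smoothed max-information $I_{\max}^{\delta}(X : B_t)$, and establish three facts: $\Phi_0 = 0$, since Bob's share of the ancilla is uncorrelated with $x$; $\Phi_t$ does not increase when Bob applies a local operation or sends qubits to Alice, by monotonicity of max-information under partial trace and local channels; and $\Phi_t$ increases by at most $2k$ when Alice sends $k$ qubits to Bob. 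The last inequality is the ``two qubits carry two classical bits'' phenomenon behind superdense coding, and in the max-information formulation it is exactly the standard fact that adjoining a register $C$ to $B$ raises $I_{\max}(X:B)$ by at most $2\log\dim C$.

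Next I would connect the final potential to $H_{\max}^{\epsilon}(\mu)$. If, at the end of the protocol, Bob measures and outputs a guess equal to $x$ with probability at least $1-\epsilon$ over $x\sim\mu$ and the internal randomness, then $\Phi_{\mathrm{final}}\ge H_{\max}^{\epsilon}(\mu)$: a quantum state from which $x$ can be recovered with error $\epsilon$ is, after an $\epsilon$-perturbation, a genuine classical--quantum encoding of a random variable whose support has size $2^{H_{\max}^{\epsilon}(\mu)}$, and any such encoding has max-information at least that logarithm. This is where the smoothing built into $\Phi$ is spent. Chaining the three facts across all rounds then yields $2\cdot(\text{number of qubits Alice sends Bob}) \ge \Phi_{\mathrm{final}} \ge H_{\max}^{\epsilon}(\mu)$, which is the assertion.

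The only step needing care beyond bookkeeping is this endgame inequality with the exact parameters: a crude combination of Holevo's theorem with Fano's inequality would lose a multiplicative factor of roughly $1-\epsilon$ and an additive $h(\epsilon)$, far too much to reach the clean bound $\tfrac12 H_{\max}^{\epsilon}(\mu)$, so the whole argument must be run with a one-shot quantity (equivalently, via a substate-theorem-style perturbation) rather than Shannon mutual information --- this is the part I expect to be the main obstacle. Finally, since none of the three facts nor the endgame inequality invoked uniformity of $\mu$, Nayak and Salzman's analysis, which they wrote only for uniform $\mu$ (where $H_{\max}^{\epsilon}(\mu) = \log\lceil(1-\epsilon)|X|\rceil$), goes through verbatim with ``$\log|X|$'' replaced by ``$H_{\max}^{\epsilon}(\mu)$'' throughout; I would simply cite it and record this substitution.
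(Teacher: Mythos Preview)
Your plan takes a genuinely different and heavier route than the paper. The paper's proof sketch is a three-line reduction: from \cite{ns02} it extracts the single $\mu$-independent inequality $\sum_{x\in X}\gamma_x \le 2^{2m}$, where $\gamma_x$ is the success probability conditioned on Alice holding $x$ and $m$ is the number of qubits Alice sends. Since each $\gamma_x\le 1$ and $\sum_x\gamma_x\le 2^{2m}$, the weighted sum $1-\epsilon=\sum_x\mu(x)\gamma_x$ is bounded by $\sum_{x\in S}\mu(x)$ for $S$ the set of the $2^{2m}$ most $\mu$-likely elements; hence $\Pr_\mu[x\in S]\ge 1-\epsilon$ and $H_{\max}^{\epsilon}(\mu)\le\log|S|=2m$. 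No potential function, no smoothed max-information, no round-by-round accounting.

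By contrast, you propose to track $I_{\max}^{\delta}(X:B_t)$ through the protocol and then argue an endgame inequality $I_{\max}^{\delta}(X:B_{\mathrm{final}})\ge H_{\max}^{\epsilon}(\mu)$. You correctly identify this last step as the obstacle, and indeed your justification---that the final state is ``after an $\epsilon$-perturbation, a genuine classical--quantum encoding of a random variable whose support has size $2^{H_{\max}^{\epsilon}(\mu)}$''---is not obviously true as stated and would require carefully matching the smoothing parameter $\delta$ to the recovery error $\epsilon$ to avoid losing constants. The paper sidesteps all of this: it uses only the \emph{conclusion} of Nayak--Salzman's round-by-round analysis (namely $\sum_x\gamma_x\le 2^{2m}$), and the passage from uniform to general $\mu$ is a one-line rearrangement/LP observation rather than a re-run of the whole argument with ``$\log|X|$'' replaced by ``$H_{\max}^{\epsilon}(\mu)$ throughout.''
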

	
	\begin{proof}[Proof sketch]
		Let $\gamma_x$ be the probability that the protocol succeeds in the case that Alice is sending $x$, so that $\sum_{x \in X} \mu(x) \gamma_x = 1 - \epsilon$. Say Alice sends $m$ qubits to Bob in the protocol. If $2^{2m} > |X|$, we're done, because $H_\text{max}^{\epsilon}(\mu) \leq \log |X|$. Otherwise, let $S$ be the set of the $2^{2m}$ most likely values of $x$. It is shown in the proof of \cite[Theorem 1.1]{ns02} that $\sum_{x \in X} \gamma_x \leq 2^{2m}$, so $\sum_{x \in X} \mu(x) \gamma_x \leq \sum_{x \in S} \mu(x)$. Therefore, $\sum_{x \in S} \mu(x) \geq 1 - \epsilon$, and hence $H_\text{max}^{\epsilon}(\mu) \leq \log |S| = 2m$.
	\end{proof}

	\cref{thm:ns02} is optimal: using superdense coding \cite{bw92}, Alice can transmit $x \sim \mu$ to Bob with $\epsilon$ probability of failure by sending $\lceil \frac{1}{2} H_\text{max}^{\epsilon}(\mu) \rceil$ qubits.
	
	Next, we generalize \cref{lem:clean} to show that when $\epsilon > 0$, every $f$ has a communication protocol that is \emph{approximately} clean in which Alice sends Bob at most $\Qcc_{\epsilon}(f)$ qubits. Again, this mimics the analysis in \cite{cdnt13}.
	\begin{lem} \label{lem:approximately-clean}
		For any function $f: X \times Y \to Z$ and any $\epsilon \geq 0$, there is a communication protocol $U$ in which Alice sends at most $\Qcc_{\epsilon}(f)$ qubits to Bob with the following correctness guarantee. Let $\ket{\phi}$ be the shared state in $U$, and define $\ket{\normalfont \text{error}_{x, y, a}}$ to be the (non-normalized) vector such that
		\[
			U\ket{x} \ket{y} \ket{\phi} \ket{a} = \ket{x} \ket{y} \ket{\phi} \ket{a + f(x, y)} + \ket{\normalfont \text{error}_{x, y, a}}.
		\]
		Then $\|\ket{\normalfont \text{error}_{x, y, a}}\|_2 \leq 2 |Z| \sqrt{\epsilon}$, and for each fixed $x \in X, a \in Z$, as $y$ varies, the states $\ket{\normalfont \text{error}_{x, y, a}}$ are orthogonal.
	\end{lem}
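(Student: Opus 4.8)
The plan is to imitate the copy-and-uncompute argument that proves \cref{lem:clean}, carefully tracking the error that now appears because the inner protocol is only approximately correct. First I would put the given $\Qcc_{\epsilon}(f)$ protocol into a convenient form: we may assume that it treats Alice's and Bob's input registers purely as controls (each player's local operation is controlled on her share of the input, hence commutes with the input registers) and that the only measurement is a final computational-basis measurement of Bob's $|Z|$-dimensional output register. Call this protocol $U_0$ and its shared state $\ket{\phi_0}$. Then $U_0(\ket{x}\ket{y} \otimes \ket{\psi}) = \ket{x}\ket{y} \otimes (U_0^{x,y}\ket{\psi})$ for unitaries $U_0^{x,y}$ on the joint ancilla--output space, and, writing $\ket{\rho_{x,y}} = U_0^{x,y}(\ket{\phi_0}\ket{0})$, the correctness guarantee says exactly that the component of $\ket{\rho_{x,y}}$ whose output register holds $\ket{f(x,y)}$ has squared norm at least $1 - \epsilon$.

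Next I would build $U$ precisely as in the proof of \cref{lem:clean}: run $U_0$ using a fresh $|Z|$-dimensional register in place of $U_0$'s output register, apply the addition unitary $V$ of that proof to transfer the answer into $U$'s output register, and then run $U_0^{-1}$. The number of qubits Alice sends Bob is unchanged from \cref{lem:clean} --- it is the total number of qubits exchanged in $U_0$, hence at most $\Qcc_{\epsilon}(f)$ --- and the shared state of $U$ is $\ket{\phi} = \ket{\phi_0} \otimes \ket{0}$, the $\ket{0}$ being the fresh register.

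The substance of the argument is tracking $\ket{x}\ket{y}\ket{\phi}\ket{a}$ through the three steps. Decompose $\ket{\rho_{x,y}} = \ket{g_{x,y}} + \ket{b_{x,y}}$, where $\ket{g_{x,y}}$ is the component whose internal output register holds $\ket{f(x,y)}$; correctness gives $\|\ket{b_{x,y}}\| \le \sqrt{\epsilon}$. After running $U_0$ the state is $\ket{x}\ket{y} \otimes \ket{\rho_{x,y}} \otimes \ket{a}$. Applying $V$, which adds the internal output register into $U$'s output register, sends $\ket{g_{x,y}}\ket{a}$ to $\ket{g_{x,y}}\ket{a + f(x,y)}$ and sends $\ket{b_{x,y}}\ket{a}$ to a vector $\ket{\beta_{x,y,a}}$ of the same norm (as $V$ is unitary). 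Finally, applying $U_0^{-1}$ and using $(U_0^{x,y})^{-1}\ket{g_{x,y}} = \ket{\phi_0}\ket{0} - (U_0^{x,y})^{-1}\ket{b_{x,y}}$, the state becomes $\ket{x}\ket{y}\ket{\phi}\ket{a + f(x,y)}$ plus two error contributions: one of norm $\|\ket{b_{x,y}}\| \le \sqrt{\epsilon}$, because we uncomputed only the good part, and one of norm $\le \sqrt{\epsilon}$ coming from $\ket{\beta_{x,y,a}}$. The triangle inequality yields $\|\ket{\text{error}_{x,y,a}}\| \le 2\sqrt{\epsilon}$, which is within the claimed bound $2|Z|\sqrt{\epsilon}$. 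For the orthogonality claim, observe that every error term still carries $\ket{y}$ in Bob's input register, so for fixed $x$ and $a$ the error vectors belonging to distinct values of $y$ are orthogonal.

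I do not expect a genuine obstacle; the work is in being disciplined about the bookkeeping. The step most prone to error is the order of operations in the error accounting: one has to split off $\ket{b_{x,y}}$ \emph{before} applying $V$, and then recognize that running $U_0^{-1}$ on $\ket{g_{x,y}}$ --- which, unlike $\ket{\rho_{x,y}}$, is not mapped by $U_0^{x,y}$ exactly to $\ket{\phi_0}\ket{0}$ --- produces a second error term of the same size, which is the source of the factor $2$. The reduction at the start, to a protocol whose input registers act only as controls, is what makes the orthogonality statement essentially immediate.
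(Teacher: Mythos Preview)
Your construction is identical to the paper's: copy-and-uncompute using the $\Qcc_\epsilon$ protocol, with input registers acting only as controls (the paper phrases this as ``copy inputs into ancillas and never touch the input registers''). The orthogonality argument is the same as well.

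The one place where you diverge is the error accounting, and your version is actually tighter. The paper does \emph{not} split $\ket{\rho_{x,y}}$ into good and bad parts before applying $V$. Instead, after applying $V$ it rewrites $\sum_z \alpha_{x,y,z}\ket{A_{x,y,z}}\ket{z}\ket{a+z}$ by adding and subtracting $\ket{a+f(x,y)}$ in the output register, so that the ``main'' piece is the full $\ket{\rho_{x,y}}\otimes\ket{a+f(x,y)}$ --- on which $U_0^{-1}$ acts \emph{exactly}, restoring $\ket{\phi}$ with no residual --- and the error piece is $\sum_{z\neq f(x,y)}\alpha_{x,y,z}\ket{A_{x,y,z}}\ket{z}(\ket{a+z}-\ket{a+f(x,y)})$. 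Bounding that by the triangle inequality over $z$ is what produces the extra $|Z|$ factor. Your good/bad decomposition instead yields two error contributions of norm at most $\sqrt{\epsilon}$ each, hence $2\sqrt{\epsilon}$ with no dependence on $|Z|$. The trade-off: the paper's add-and-subtract trick avoids having to observe that $(U_0^{x,y})^{-1}\ket{g_{x,y}}=\ket{\phi_0}\ket{0}-(U_0^{x,y})^{-1}\ket{b_{x,y}}$, but pays for this with the looser bound.
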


	\begin{proof}
		The \emph{construction} of $U$ is the same as that used to prove \cref{lem:clean}, with one modification: Alice and Bob begin by copying their inputs into ancilla registers, and then they never touch their input registers. Now, we analyze $U$. After executing the protocol $U_0$ for $f$, the state is
		\[
			\ket{x} \ket{y} \left(\sum_{z \in Z} \alpha_{x, y, z} \ket{A_{x, y, z}} \ket{z}\right) \ket{a},
		\]
		where $\sum_z |\alpha_{x, y, z}|^2 = 1$ and $|\alpha_{x, y, f(x, y)}|^2 \geq 1 - \epsilon$. After applying the unitary $V$ defined in the proof of \cref{lem:clean}, the state is
		\[
			\ket{x} \ket{y} \left(\sum_{z \in Z} \alpha_{x, y, z} \ket{A_{x, y, z}} \ket{z} \ket{a + z}\right),
		\]
		which we can rewrite as
		\[
			\ket{x} \ket{y} \left(\sum_{z \in Z} \alpha_{x, y, z} \ket{A_{x, y, z}} \ket{z}\right) \ket{a + f(x, y)} + \ket{x} \ket{y} \left(\sum_{\substack{z \in Z \\ z \neq f(x, y)}} \alpha_{x, y, z} \ket{A_{x, y, z}} \ket{z} (\ket{a + z} - \ket{a + f(x, y)})\right).
		\]
		It follows that
		\[
			\ket{\text{error}_{x, y, a}} = (U_0^{-1} \otimes I) \ket{x} \ket{y}\left(\sum_{\substack{z \in Z \\ z \neq f(x, y)}} \alpha_{x, y, z} \ket{A_{x, y, z}} (\ket{a + z} - \ket{a + f(x, y)})\right).
		\]
		These vectors are othogonal as $y$ varies, because $\ket{y}$ is a factor of $(U_0 \otimes I) \ket{\text{error}_{x, y, a}}$. We now bound the $\ell_2$ norm:
		\begin{align*}
			\|\ket{\text{error}_{x, y, a}}\|_2 &= \left\|\ket{x} \ket{y}\left(\sum_{\substack{z \in Z \\ z \neq f(x, y)}} \alpha_z \ket{A_{x, y, z}} (\ket{a + z} - \ket{a + f(x, y)})\right)\right\|_2 \quad \text{by unitarity} \\
			&\leq 2\sum_{\substack{z \in Z \\ z \neq f(x, y)}} |\alpha_z| \quad \text{by the triangle inequality} \\
			&\leq 2|Z| \sqrt{\epsilon}. \qedhere
		\end{align*}
	\end{proof}

	We rely on a standard lemma relating $\ell_2$ distance and total variation distance:
	\begin{lem}[{\cite{bv97}}] \label{lem:variation-distance}
		Suppose $\ket{\phi}, \ket{\psi}$ are pure quantum states of the same dimension satisfying $\|\ket{\phi} - \ket{\psi}\|_2 \leq \epsilon$. Then the total variation distance between the two probability distributions resulting from making the same measurement on $\ket{\phi}$ vs. $\ket{\psi}$ is at most $4\epsilon$.
	\end{lem}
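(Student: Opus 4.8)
The plan is to prove this directly from the definitions, since it is a standard fact with an entirely elementary proof. Fix measurement operators $\{M_i\}_i$ realizing the given measurement, so that $\sum_i M_i^\dagger M_i = I$ and the outcome probabilities are $p_i = \|M_i\ket{\phi}\|_2^2$ when the state is $\ket{\phi}$ and $q_i = \|M_i\ket{\psi}\|_2^2$ when the state is $\ket{\psi}$ (a projective measurement is the special case $M_i = P_i$). The total variation distance between the two outcome distributions is $\frac{1}{2}\sum_i |p_i - q_i|$, so the goal is to bound this sum.

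The first step is a pointwise bound on $|p_i - q_i|$. Factoring the difference of squares,
\[
|p_i - q_i| = \big|\,\|M_i\ket{\phi}\|_2 - \|M_i\ket{\psi}\|_2\,\big|\cdot\big(\|M_i\ket{\phi}\|_2 + \|M_i\ket{\psi}\|_2\big),
\]
and by the reverse triangle inequality in the Hilbert space the first factor is at most $\|M_i(\ket{\phi}-\ket{\psi})\|_2$. The second step is to sum over $i$ and apply Cauchy--Schwarz:
\[
\sum_i |p_i - q_i| \;\le\; \Big(\sum_i \|M_i(\ket{\phi}-\ket{\psi})\|_2^2\Big)^{1/2}\Big(\sum_i \big(\|M_i\ket{\phi}\|_2 + \|M_i\ket{\psi}\|_2\big)^2\Big)^{1/2}.
\]
By completeness of the measurement operators the first factor equals $\|\ket{\phi}-\ket{\psi}\|_2 \le \epsilon$, and since $\ket{\phi},\ket{\psi}$ are unit vectors the second factor is at most $\big(2\sum_i(\|M_i\ket{\phi}\|_2^2 + \|M_i\ket{\psi}\|_2^2)\big)^{1/2} = \big(2(1+1)\big)^{1/2} = 2$. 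Hence $\sum_i |p_i - q_i| \le 2\epsilon$, so the total variation distance is at most $\epsilon$, which is comfortably within the claimed bound of $4\epsilon$.

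I do not expect a real obstacle here; the only points requiring a little care are (i) ensuring the argument is stated so that it covers a general POVM and not merely a projective measurement, which the measurement-operator formulation handles uniformly, and (ii) the constant --- the argument above actually yields $\epsilon$, so no matter which normalization convention is used for total variation distance (with or without the factor of $\frac12$), the stated bound of $4\epsilon$ holds with room to spare, so there is no need to optimize. One could instead invoke the pure-state identities $D(\ket{\phi},\ket{\psi}) = \sqrt{1-|\langle\phi|\psi\rangle|^2}$ and $\|\ket{\phi}-\ket{\psi}\|_2^2 = 2 - 2\,\mathrm{Re}\langle\phi|\psi\rangle$ together with the fact that trace distance upper-bounds measurement total variation distance, but the self-contained computation above is shorter and avoids citing those facts.
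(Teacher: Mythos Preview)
Your argument is correct; in fact it proves the stronger bound that the total variation distance is at most $\epsilon$ (or $2\epsilon$ if one omits the factor $\tfrac12$ in the definition), well inside the stated $4\epsilon$. The factorization $|p_i-q_i|=\big|\|M_i\ket{\phi}\|-\|M_i\ket{\psi}\|\big|\cdot\big(\|M_i\ket{\phi}\|+\|M_i\ket{\psi}\|\big)$ together with the reverse triangle inequality, Cauchy--Schwarz, and the completeness relation $\sum_i M_i^\dagger M_i=I$ is exactly the right chain, and your care to phrase it for general measurement operators (rather than only projectors) is appropriate.

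There is nothing to compare here: the paper does not give its own proof of this lemma. It is quoted as a standard fact from \cite{bv97} and used as a black box in the proof of \cref{lem:transmit-data}. Your self-contained derivation is therefore a welcome addition rather than an alternative to anything in the paper; the constant $4$ in the statement is simply inherited from the cited source and is not sharp.
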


	Now, like in the proof of \cref{thm:zero-error}, we use the approximately clean protocol to transmit data from Alice to Bob:
	\begin{lem} \label{lem:transmit-data}
		Fix any $f: X \times Y \to Z$, any distribution $\mu$ over $X$, and any $\epsilon > 0$. There is a protocol for sending $x \sim \mu$ from Alice to Bob with failure probability at most
		\[
			\frac{1}{3} + O(\sqrt{\epsilon} |Z|^2 \Qoip_{\mu, 1/3}(f))
		\]
		using an entanglement-assisted qubit channel in which Alice sends at most $\Qcc_{\epsilon}(f) \cdot \Qoip_{\mu, 1/3}(f)$ qubits to Bob. The failure probability is over the randomness of $x$ and the internal randomness of the protocol.
	\end{lem}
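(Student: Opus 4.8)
The plan is for Bob to learn $x$ by running an optimal $\Qoip_{\mu,1/3}(f)$ query algorithm, with each oracle call to $U_{f_x}$ simulated by the approximately clean communication protocol of \cref{lem:approximately-clean} instantiated with error parameter $\epsilon$. Write $q = \Qoip_{\mu,1/3}(f)$ and fix a $q$-query algorithm $\mathcal A$ that, on $x \sim \mu$, outputs $x$ with failure probability at most $1/3$ over $x$ and $\mathcal A$'s internal coins. A randomized query algorithm is a distribution over deterministic ones, so it suffices to describe the simulation for a fixed deterministic $\mathcal A_r$ — a sequence of Bob-side unitaries $V_0, \dots, V_q$ interleaved with $q$ calls to $U_{f_x}$, followed by a measurement — and have Bob sample $r$ privately. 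In the communication protocol, Alice holds $\ket x$, the players share the state $\ket\phi$ from \cref{lem:approximately-clean}, and Bob holds $\mathcal A_r$'s query, target, and workspace registers. Bob applies $V_0$; then for $i = 1, \dots, q$, the players run the protocol $U$ of \cref{lem:approximately-clean} (acting on Alice's register, Bob's query and target registers, and $\ket\phi$) to simulate the $i$-th oracle call, after which Bob applies $V_i$; finally Bob measures and outputs the result. Each invocation of $U$ sends at most $\Qcc_\epsilon(f)$ qubits from Alice to Bob and Bob's own operations send nothing, so the total number of qubits sent from Alice to Bob is at most $q\,\Qcc_\epsilon(f) = \Qcc_\epsilon(f)\cdot\Qoip_{\mu,1/3}(f)$, as required.

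Bounding the failure probability is the crux. Fix $x$ and Bob's coins $r$, and compare the state output by the protocol to the ``ideal'' state produced by the same circuit with every copy of $U$ replaced by the true oracle $O$ — the operator controlled on $\ket x$ that applies $U_{f_x}$ to Bob's query and target registers and acts as the identity on $\ket\phi$ and the workspace. A standard hybrid argument over the $q$ oracle calls bounds the $\ell_2$ distance between the two final states by $\sum_{k=1}^q \|(U - O)\ket{\chi_k}\|_2$, where $\ket{\chi_k}$ is the state just before the $k$-th oracle call in the ideal run. The key point is that in the ideal run neither the $V_i$ nor $O$ disturb $\ket x$ or $\ket\phi$, so $\ket{\chi_k}$ is exactly of the form $\ket x \otimes \ket\phi \otimes (\text{Bob's state})$; expanding Bob's state in the computational basis of the query and target registers and in an orthonormal basis of the workspace and applying \cref{lem:approximately-clean}, $(U-O)\ket{\chi_k}$ becomes a linear combination of the vectors $\ket{\text{error}_{x,y,a}}$. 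Using the orthogonality of these vectors as $y$ varies, together with the triangle inequality and Cauchy–Schwarz over the $|Z|$ target values $a$ and the bound $\|\ket{\text{error}_{x,y,a}}\|_2 \le 2|Z|\sqrt\epsilon$, one obtains $\|(U-O)\ket{\chi_k}\|_2 \le O(|Z|^2\sqrt\epsilon)$, and hence the two final states are $O(\sqrt\epsilon\,|Z|^2 q)$-close in $\ell_2$.

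Applying \cref{lem:variation-distance}, for every fixed $x$ and $r$ the distribution of Bob's measurement outcome in the protocol is within total variation distance $O(\sqrt\epsilon\,|Z|^2 q)$ of the ideal outcome. Averaging over $x \sim \mu$ and $r$, the ideal run (which is just $\mathcal A$ with a genuine oracle) fails with probability at most $1/3$, so the protocol fails with probability at most $1/3 + O(\sqrt\epsilon\,|Z|^2\,\Qoip_{\mu,1/3}(f))$, as claimed. The main obstacle is the error analysis of the second paragraph: one must exploit the orthogonality structure guaranteed by \cref{lem:approximately-clean} so that the per-query errors accumulate only linearly in $q$, while carefully tracking the powers of $|Z|$ contributed by the superposition over the target register.
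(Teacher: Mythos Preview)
Your proposal is correct and follows essentially the same approach as the paper: simulate the $\Qoip_{\mu,1/3}(f)$ algorithm with each oracle call replaced by the approximately clean protocol of \cref{lem:approximately-clean}, bound the accumulated $\ell_2$ error to $O(|Z|^2\sqrt{\epsilon}\,q)$ using the orthogonality of the $\ket{\text{error}_{x,y,a}}$ in $y$ and a triangle inequality over $a$, and finish with \cref{lem:variation-distance} and an average over $x\sim\mu$. Your hybrid-argument phrasing is equivalent to the paper's explicit induction on the number of queries; the only cosmetic difference is that the paper tracks the state after each query directly rather than summing $\|(U-O)\ket{\chi_k}\|_2$.
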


	\begin{proof}
		As in the proof of \cref{thm:zero-error}, Bob executes the $\Qoip_{\mu, 1/3}(f)$ protocol. He simulates each query using the communication protocol of \cref{lem:approximately-clean}. The total number of qubits sent from Alice to Bob in this protocol is $\Qcc(f) \cdot \Qoip_{\mu, 1/3}(f)$.
		
		Now we analyze the failure probability. In this protocol, the state is initially
		\[
			\ket{x} \ket{\phi} \ket{0} \ket{0} \ket{0} \ket{0}.
		\]
		Here, we have ordered the registers as follows: Alice's input register; the shared register for the communication protocol; a register which is simultaneously Bob's input register for the communication protocol and the query register for the OIP algorithm; a register which is simultaneously the output register of the communication protocol and of the OIP oracle; the ancilla state of the query algorithm; the output register of the query algorithm. We show by induction on $i$ that after $i$ queries, the state is $(2|Z|^2 \sqrt{\epsilon} i)$-close (in $\ell_2$ distance) to the state
		\[
			\ket{x} \ket{\phi} \ket{A_{x, i}},
		\]
		where $\ket{A_{x, i}}$ is the state that the $\Qoip_{\mu, 1/3}(f)$ algorithm is in after making $i$ queries. For the base case, when $i = 0$, this is trivially true. For the inductive step, assume it is true for some value of $i$. Then Bob will apply a unitary operation $V_i$ for processing before query $i + 1$. After this, by unitarity, the state is $(2|Z|^2 \sqrt{\epsilon} i)$-close to
		\[
			\ket{x} \ket{\phi} \left(\sum_{y \in Y, a \in Z} \alpha_{y, z, i + 1} \ket{y} \ket{a} \ket{B_{x, y, a, i + 1}}\right),
		\]
		where $\sum_{y \in Y, a \in Z} \alpha_{y, z, i + 1} \ket{y} \ket{a} \ket{B_{x, y, a, i + 1}}$ is the state that the $\Qoip_{\mu, 1/3}(f)$ algorithm is in just before making query $i + 1$. Then, Bob runs the communication protocol, leading to a state that is $(2|Z|^2 \sqrt{\epsilon} i)$-close to
		\[
			\ket{x} \ket{\phi} \left(\sum_{y \in Y, a \in Z} \alpha_{y, z, i + 1}\ket{y} \ket{a + f(x, y)} \ket{B_{x, y, a, i + 1}}\right) + \sum_{y \in Y, a \in Z} \alpha_{y, z, i + 1} \ket{\text{error}_{x, y, a}} \ket{B_{x, y, a, i + 1}}.
		\]
		Now, of course, $\sum_{y \in Y, a \in Z} \alpha_{y, z, i + 1}\ket{y} \ket{a + f(x, y)} \ket{B_{x, y, a, i + 1}}$ is another way of writing $A_{x, i + 1}$. Furthermore,
		\begin{align*}
			\left\|\sum_{y \in Y, a \in Z} \alpha_{y, z, i + 1} \ket{\text{error}_{x, y, a}} \ket{B_{x, y, a, i + 1}}\right\|_2 &\leq \sum_{a \in Z} \left\| \sum_{y \in Y} \alpha_{y, z, i + 1} \ket{\text{error}_{x, y, a}} \ket{B_{x, y, a, i + 1}}\right\|_2
		\end{align*}
		by the triangle inequality. Since the states $\ket{\text{error}_{x, y, a}}$ are orthogonal as $y$ varies and each $\|\ket{\text{error}_{x, y, a}}\|_2$ is bounded by $2|Z| \sqrt{\epsilon}$, we have
		\begin{align*}
			\left\|\sum_{y \in Y, a \in Z} \alpha_{y, z, i + 1} \ket{\text{error}_{x, y, a}} \ket{B_{x, y, a, i + 1}}\right\|_2 &\leq \sum_{a \in Z} 2|Z| \sqrt{\epsilon} \cdot  \sqrt{\sum_{y \in Y} |\alpha_{y, z, i + 1}|^2} \\
			&\leq \sum_{a \in Z} 2 |Z| \sqrt{\epsilon} \\
			&\leq 2|Z|^2 \sqrt{\epsilon}.
		\end{align*}
		Another application of the triangle inequality completes the induction. It follows that just before measurement, the state is $(2|Z|^2 \sqrt{\epsilon} \Qoip_{\mu, 1/3}(f))$-close to what it would be in the $\Qoip_{\mu, 1/3}(f)$ protocol.
		
		Let $\gamma_x$ be the probability that the $\Qoip_{\mu, 1/3}(f)$ protocol fails in the case that the correct answer is $x \in X$, so that $\E_{x \sim \mu}[\gamma_x]$ is the overall failure probability of the $\Qoip_{\mu, 1/3}(f)$ algorithm and hence is bounded by $1/3$. By \cref{lem:variation-distance}, the probability that our data transmission protocol fails in the case that Alice holds $x$ is at most $\gamma_x + 8|Z|^2 \sqrt{\epsilon} \Qoip_{\mu, 1/3}(f)$. Therefore, the overall failure probability of our data transmission protocol is bounded by
		\[
			\E_{x \sim \mu}[\gamma_x + 8|Z|^2 \sqrt{\epsilon} \Qoip_{\mu, 1/3}(f)] \leq 1/3 + O(\sqrt{\epsilon} |Z|^2 \Qoip_{\mu, 1/3}(f)). \qedhere
		\]
	\end{proof}
	
	\begin{lem} \label{lem:distributional-tradeoff-before-amplification}
		For every $f: X \times Y \to Z$ and every distribution $\mu$ on $X$, there is a value $\epsilon \geq \Omega(|Z|^{-4} \Qoip_{\mu, 1/3}(f)^{-2})$ such that
		\[
			\Qcc_{\epsilon}(f) \cdot \Qoip_{\mu, 1/3}(f) \geq \frac{1}{2} H_\text{\normalfont max}^{1/2}(\mu).
		\]
	\end{lem}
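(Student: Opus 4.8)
The plan is to feed \cref{lem:transmit-data} into \cref{thm:ns02}, choosing the error parameter $\epsilon$ of \cref{lem:transmit-data} as large as possible subject to keeping the resulting data-transmission protocol correct with probability at least $1/2$. First I would dispose of degenerate cases. If $\Qoip_{\mu, 1/3}(f) = \infty$ (in particular, if $f$ does not depend on $x$) the asserted inequality holds vacuously. If $\Qoip_{\mu, 1/3}(f) = 0$, then a query-free algorithm identifies $x \sim \mu$ with failure probability at most $1/3$, which forces some single value of $x$ to have $\mu$-mass at least $2/3$; hence $H_\text{\normalfont max}^{1/2}(\mu) = 0$ and the inequality is trivially satisfied (taking, say, $\epsilon = 1/2$). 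So from now on I assume $1 \le \Qoip_{\mu, 1/3}(f) < \infty$.

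Next I would set $\epsilon := c \cdot |Z|^{-4} \Qoip_{\mu, 1/3}(f)^{-2}$ for a small absolute constant $c > 0$ to be pinned down at the end; this is a positive real with $\epsilon \ge \Omega(|Z|^{-4} \Qoip_{\mu, 1/3}(f)^{-2})$, as required. Applying \cref{lem:transmit-data} with this $\epsilon$ yields a protocol on an entanglement-assisted qubit channel that transmits $x \sim \mu$ from Alice to Bob, in which Alice sends at most $\Qcc_{\epsilon}(f) \cdot \Qoip_{\mu, 1/3}(f)$ qubits to Bob, with failure probability at most
\[
\tfrac{1}{3} + O\!\left(\sqrt{\epsilon}\,|Z|^2\,\Qoip_{\mu, 1/3}(f)\right) = \tfrac{1}{3} + O(\sqrt{c}),
\]
since $\sqrt{\epsilon}\,|Z|^2\,\Qoip_{\mu, 1/3}(f) = \sqrt{c}$ by the choice of $\epsilon$. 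Taking $c$ to be a sufficiently small absolute constant makes the $O(\sqrt{c})$ term at most $1/6$, so the protocol's overall failure probability is some value $\epsilon' \le 1/2$.

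Finally I would invoke \cref{thm:ns02} with this protocol: it says that Alice must send Bob at least $\tfrac{1}{2} H_\text{\normalfont max}^{\epsilon'}(\mu)$ qubits. Since the smooth max-entropy $H_\text{\normalfont max}^{\delta}(\mu)$ is non-increasing in the smoothing parameter $\delta$ and $\epsilon' \le 1/2$, we have $H_\text{\normalfont max}^{\epsilon'}(\mu) \ge H_\text{\normalfont max}^{1/2}(\mu)$. Chaining this with the qubit count from \cref{lem:transmit-data} gives
\[
\Qcc_{\epsilon}(f) \cdot \Qoip_{\mu, 1/3}(f) \ \ge\ \tfrac{1}{2} H_\text{\normalfont max}^{\epsilon'}(\mu) \ \ge\ \tfrac{1}{2} H_\text{\normalfont max}^{1/2}(\mu),
\]
which is exactly the claim. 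I do not anticipate a genuine obstacle here: the argument is essentially bookkeeping, the substantive content having already been established in \cref{lem:transmit-data} and \cref{thm:ns02}. The one point demanding care is the smoothing-parameter matchup — \cref{thm:ns02} naturally produces a bound in terms of the protocol's true (and only upper-bounded) failure probability $\epsilon'$, and one must invoke monotonicity of smooth max-entropy in its smoothing parameter to convert this into the stated bound in terms of $H_\text{\normalfont max}^{1/2}(\mu)$.
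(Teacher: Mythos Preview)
Your proposal is correct and follows essentially the same approach as the paper's proof, which simply says to choose $\epsilon$ small enough that the failure probability in \cref{lem:transmit-data} is at most $1/2$ and then apply \cref{thm:ns02}. You have just spelled out the details the paper leaves implicit: the handling of degenerate cases, the explicit choice $\epsilon = c\,|Z|^{-4}\Qoip_{\mu,1/3}(f)^{-2}$, and the monotonicity of $H_{\text{max}}^{\delta}$ in $\delta$ needed to pass from the actual failure probability $\epsilon' \le 1/2$ to the stated bound in terms of $H_{\text{max}}^{1/2}(\mu)$.
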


	\begin{proof}
		Choose $\epsilon$ small enough that the failure probability in \cref{lem:transmit-data} is only $1/2$. Then apply \cref{thm:ns02}.
	\end{proof}

	\begin{proof}[Proof of \cref{thm:distributional-tradeoff}]
		Let $\epsilon$ be the value in \cref{lem:distributional-tradeoff-before-amplification}. By straightforward amplification,
		\[
			\Qcc_{\epsilon}(f) \leq O(\Qcc(f) \cdot (\log \Qoip(f) + \log |Z|).
		\]
		Therefore,
		\[
			\Qcc(f) \cdot \Qoip_{\mu, 1/3}(f) \cdot (\log \Qoip_{\mu, 1/3}(f) + \log |Z|) \geq \Omega(\Qcc_{\epsilon}(f)) \cdot \Qoip_{\mu, 1/3}(f) \geq \Omega(H_\text{max}^{1/2}(\mu)). \qedhere
		\]
	\end{proof}

	\begin{proof}[Proof of \cref{thm:main}]
		Let $\mu$ be the uniform distribution on $X$. Then $H_\text{max}^{1/2}(\mu) = \log |X| - 1$ and $\Qoip(f) \geq \Qoip_{\mu, 1/3}(f)$, so applying \cref{thm:distributional-tradeoff} completes the proof.
	\end{proof}

	\section{Applications} \label{sec:applications}

	\subsection{Determining whether $x + y$ is a perfect square}
	Fix a finite field $\F_q$. Define $\mathsf{PS}: \F_q \times \F_q \to \{0, 1\}$ by
	\[
	\mathsf{PS}(x, y) = \begin{cases}
	1 & \text{if there is some $z \in \F_q$ such that $x + y = z^2$} \\
	0 & \text{otherwise.}
	\end{cases}
	\]
	
	\begin{theorem} \label{thm:ps}
		If $q$ is a power of an odd prime, then $\Qcc(\mathsf{PS}) \geq \Omega(\log q)$.
	\end{theorem}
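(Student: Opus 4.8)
The plan is to derive \cref{thm:ps} from \cref{thm:main} by exhibiting an $O(1)$-query oracle identification algorithm for $\mathsf{PS}$. Since $|X| = q$ and $|Z| = 2$, once we know $\Qoip_{1/3}(\mathsf{PS}) \le O(1)$, \cref{eqn:main} reads $\Qcc(\mathsf{PS}) \cdot O(1) \cdot O(1) \ge \Omega(\log q)$, which is exactly the claim. (For $q$ below any fixed constant the bound is vacuous, since $\mathsf{PS}$ genuinely depends on $x$ — e.g.\ $\mathsf{PS}(x, 0) = [x \text{ is a square}]$ — so $\Qcc(\mathsf{PS}) \ge 1$; hence we may freely assume $q$ is large.) So the whole task reduces to showing $\Qoip_{1/3}(\mathsf{PS}) \le 1$ for all sufficiently large odd prime powers $q$.

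Identifying $x$ from an oracle for $\mathsf{PS}(x, \cdot)$ is precisely the hidden-shift problem for the quadratic character of $\F_q$: letting $\chi \colon \F_q^* \to \{\pm 1\}$ denote the quadratic character and extending it by $\chi(0) = 0$, we have $\mathsf{PS}(x, y) = [\, x + y \text{ is a square}\,] = \tfrac{1}{2}(1 + \chi(x+y))$ for every $y \ne -x$. This is the shifted-Legendre-symbol problem solved with $O(1)$ quantum queries by van Dam, Hallgren, and Ip, and I would reproduce their algorithm in this setting. First, feed the $\{0,1\}$-valued oracle $U_{f_x}$ a $\ket{-}$ ancilla to obtain the phase oracle $\ket{y} \mapsto (-1)^{\mathsf{PS}(x,y)} \ket{y} = -\tilde{\chi}(x + y) \ket{y}$, where $\tilde{\chi}$ agrees with $\chi$ away from $0$ and $\tilde{\chi}(0) = 1$; applying it once to the uniform superposition gives, up to global phase, $\tfrac{1}{\sqrt q} \sum_{t \in \F_q} \tilde{\chi}(t) \ket{t - x}$. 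The crucial ingredient is the self-duality of $\chi$ under the additive Fourier transform of $\F_q$: for any nontrivial additive character $\psi$, writing $G = \sum_{t} \chi(t) \psi(t)$ for the quadratic Gauss sum, one has $\widehat{\chi}(a) = G \cdot \chi(a)$ for all $a \in \F_q$ and $|G| = \sqrt q$. Applying the quantum Fourier transform over $\F_q \cong (\Z_p)^k$ (where $q = p^k$) therefore sends the state above to $\tfrac{G}{q} \sum_a \chi(a) \psi(-a x) \ket{a}$, up to a unit-modulus global scalar and an $\ell_2$-error of size $O(q^{-1/2})$ caused by the single-point discrepancy between $\tilde{\chi}$ and $\chi$ at $0$. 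Now apply the diagonal unitary $\ket{a} \mapsto \tilde{\chi}(a) \ket{a}$ — this depends only on $q$ (indeed $\chi(a) = a^{(q-1)/2}$), so it is input-independent and costs no queries — to cancel the $\chi(a)$ factor, leaving $\tfrac{G}{q} \sum_a \psi(-a x) \ket{a}$ up to a further $O(q^{-1/2})$ error; this is a unit-modulus multiple of a Fourier transform of $\ket{x}$. An inverse Fourier transform followed by a measurement then returns $x$ (or $-x$, depending on conventions — equally good) with probability $1 - O(q^{-1/2})$, which exceeds $2/3$ once $q$ is large, giving $\Qoip_{1/3}(\mathsf{PS}) \le 1$.

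The computation is routine, so the only points needing care are: (i) extracting the phase oracle from $U_{f_x}$ and tracking the harmless disagreement between $\mathsf{PS}(x, \cdot)$ and $\tfrac{1}{2}(1 + \chi(x + \cdot))$ at the single input $y = -x$; (ii) verifying the Gauss-sum identities $\widehat{\chi} = G\chi$ and $|G| = \sqrt q$ over the possibly non-prime field $\F_q$ — this is classical, following from the substitution $s = at$ in $\sum_t \chi(t) \psi(at) = \chi(a^{-1}) \sum_s \chi(s) \psi(s)$ for $a \ne 0$, together with $\sum_t \chi(t) = 0$; and (iii) controlling the accumulated $\ell_2$ error, which stays $O(q^{-1/2})$ at every step, so that \cref{lem:variation-distance} turns it into an $O(q^{-1/2})$ bound on the measurement's failure probability. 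I expect (ii) and (iii) to be the only places one could slip, though neither is difficult; alternatively, one could simply cite the van Dam--Hallgren--Ip algorithm as a black box and omit the explicit construction.
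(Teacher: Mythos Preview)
Your argument is correct. Both the paper and you apply \cref{thm:main} after showing the relevant oracle identification problem has $O(1)$ query complexity, but the route to that $O(1)$ bound differs. The paper does \emph{not} bound $\Qoip(\mathsf{PS})$ directly; instead it introduces the ternary-valued function $\mathsf{PS}'(x,y)=\chi(x+y)$ (with $\chi(0)=0$), invokes van Dam's result that $\Qoip_0(\mathsf{PS}')\le 2$ as a black box, obtains $\Qcc(\mathsf{PS}')\ge\Omega(\log q)$ from \cref{thm:main}, and then observes $\Qcc(\mathsf{PS}')\le O(\Qcc(\mathsf{PS}))$ because $\mathsf{PS}'$ can be computed from $\mathsf{PS}$ together with a constant-cost equality test for $x=-y$. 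Your approach instead works with $\mathsf{PS}$ itself, building a one-query bounded-error identification algorithm by the phase-kickback / Gauss-sum computation and tracking the $O(q^{-1/2})$ perturbation from the single point $y=-x$. What you gain is directness (no auxiliary function, no communication reduction step) and a slightly sharper $\Qoip$ bound; what the paper gains is modularity — it cites the exact two-query result and avoids any error analysis. Both are perfectly valid proofs of the statement.
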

	
	\begin{proof}
		Define $\mathsf{\chi}: \F_q \to \{-1, 0, 1\}$ by
		\begin{equation} \label{eqn:chi}
			\chi(x) = \begin{cases}
				1 & \text{if there is some nonzero } z \in \F_q \text{ such that } x = z^2 \\
				0 & \text{if } x = 0 \\
				-1 & \text{otherwise.}
			\end{cases}
		\end{equation}
		Define $\mathsf{PS}': \F_q \times \F_q \to \{-1, 0, 1\}$ by $\mathsf{PS}'(x, y) = \chi(x + y)$. Van Dam showed \cite{dam02} that $\Qoip_0(\mathsf{PS}') \leq 2$. By \cref{thm:main}, it follows that $\Qcc(\mathsf{PS}') \geq \Omega(\log q)$. Finally, $\Qcc(\mathsf{PS}') \leq O(\Qcc(\mathsf{PS}))$, because to compute $\mathsf{PS}'(x, y)$, Alice and Bob can compute $\mathsf{PS}(x, y)$ and also use an equality protocol to check if $x = -y$.
	\end{proof}
	
	Note that the hypothesis of \cref{thm:ps} is necessary, because if $q$ is a power of two, then \emph{every} element of $\F_q$ is a perfect square. The OIP algorithm by van Dam \cite{dam02} that is used in the proof of \cref{thm:ps} is related to Paley's construction of Hadamard\footnote{Here, a \emph{Hadamard matrix} is any matrix with $\pm 1$ entries whose rows are orthogonal.} matrices \cite{pal33}, and a closely related algorithm by van Dam \cite{dam02} can be combined with \cref{thm:main} to prove that if the communication matrix associated with $f: X \times X \to \{-1, 1\}$ is a Hadamard matrix, then $\Qcc(f) \geq \Omega(\log |X|)$. (An alternate proof of this last fact: if the communication matrix associated with $f$ is a Hadamard matrix, then the discrepancy of $f$ with respect to the uniform distribution is at most $|X|^{-1/2}$ \cite{kn97}. The $\Qcc$ lower bound follows by a theorem by Linial and Shraibman \cite{ls09b}.) We now give $\Qcc$ lower bounds for two other algebraic problems.
	
	\paragraph{General multiplicative characters.}
	The function $\chi$ defined by \cref{eqn:chi} is an example of a \emph{multiplicative character}. For our purposes, a multiplicative character of $\F_q$ is a function $\chi: \F_q \to \C$ such that $\chi(xy) = \chi(x) \chi(y)$ for all $x, y \in \F_q$. We say that $\chi$ is \emph{trivial} if $\chi(x) = 1$ for all nonzero $x$. \cref{thm:ps} extends to arbitrary nontrivial multiplicative characters:
	\begin{theorem}
		Suppose $\chi: \F_q \to \C$ is a nontrivial multiplicative character, $g$ is a generator of $\F_q^{\times}$, and $d$ is a positive integer such that $\chi(g^d) = 1$. Define $f: \F_q \times \F_q \to \C$ by $f(x, y) = \chi(x + y)$. Then $\Qcc(f) \geq \Omega(\log q /\log(d + 1))$.
	\end{theorem}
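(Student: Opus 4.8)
The plan is to apply \cref{thm:main} directly to $f(x,y) = \chi(x+y)$, viewed as a map onto its (finite) image $Z \subseteq \C$. Two ingredients are needed: a bound on $|Z|$, and an upper bound on $\Qoip(f)$ that is independent of $d$.

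Bounding $|Z|$ is immediate. Putting $y = 0$ in $\chi(xy) = \chi(x)\chi(y)$ and using that $\chi$ is nontrivial gives $\chi(0) = 0$; and the restriction of $\chi$ to $\F_q^{\times}$ is a homomorphism into $\C^{\times}$, hence is determined by $\chi(g)$, which satisfies $\chi(g)^d = \chi(g^d) = 1$ and so is a $d$-th root of unity. Thus $Z$ consists of $0$ together with a subset of the $d$-th roots of unity, whence $|Z| \le d+1$. (Since $\chi$ is nonconstant on $\F_q^{\times}$, also $|Z| \ge 3$ and $d \ge 2$; I use this below.)

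For the query bound, I claim $\Qoip_0(f) \le O(1)$, with the constant independent of $d$. For the quadratic character this is van Dam's bound $\Qoip_0(\mathsf{PS}') \le 2$ used in the proof of \cref{thm:ps}, and the same algorithm works for any nontrivial $\chi$: recovering $x$ from the oracle $f_x(y) = \chi(x+y)$ is the hidden-shift problem for $\chi$ over the additive group of $\F_q$, and the property that powers van Dam's algorithm --- that the Gauss sum $\bigl|\sum_{y \in \F_q}\chi(y)\psi(y)\bigr|$ equals $\sqrt{q}$ for every nontrivial additive character $\psi$ of $\F_q$ --- holds for every nontrivial multiplicative character, regardless of its order. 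Concretely, one applies the oracle once to the uniform superposition over $y \in \F_q$, applies the Fourier transform over the additive group of $\F_q$, rescales the amplitude of each basis state $\ket{k}$ by a unit-modulus factor depending only on $k$ and computable from $\chi$ via a discrete logarithm (no further queries), and applies the inverse Fourier transform; the outcome is $\ket{x}$ up to a global phase and an $O(q^{-1/2})$ error. The only inputs at which $\chi$ behaves anomalously --- the trivial additive character and the single point where $\chi(x+y) = 0$ --- disturb $O(1)$ of the $q$ Fourier amplitudes, which is absorbed by the error tolerance in \cref{thm:main} (or removed exactly with one extra query, as in van Dam's treatment). The step I expect to need the most care is confirming that van Dam's algorithm, stated for the Legendre symbol, adapts both to a general $\chi$ and to the oracle model used here; both adaptations should be routine, with the value $0$ the only real subtlety.

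Given these two ingredients, \cref{thm:main} together with $\Qoip(f) \le \Qoip_0(f) \le O(1)$ and $|Z| \le d+1$ yields
\[
\Qcc(f) \geq \frac{\Omega(\log q)}{\Qoip(f)\cdot\bigl(\log\Qoip(f) + \log|Z|\bigr)} \geq \frac{\Omega(\log q)}{O(1)\cdot\bigl(O(1) + \log(d+1)\bigr)} = \Omega\!\left(\frac{\log q}{\log(d+1)}\right),
\]
where the last equality uses $d \ge 2$, so that $\log(d+1)$ absorbs the additive constant.
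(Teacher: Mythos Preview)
Your proposal is correct and follows essentially the same approach as the paper: bound $|Z|\le d+1$, bound $\Qoip(f)$ by a constant, and invoke \cref{thm:main}. The only difference is that the paper cites a result of van Dam, Hallgren, and Ip giving $\Qoip(f)=1$ directly for an arbitrary nontrivial multiplicative character, whereas you sketch the generalization of van Dam's Legendre-symbol algorithm yourself.
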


	\begin{proof}
		The image $\chi(\F_q)$ has at most $d + 1$ elements, namely $\chi(g), \chi(g^2), \dots, \chi(g^d)$, and $\chi(0)$. Therefore, we can think of $f$ as a function $\F_q \times \F_q \to Z$ where $|Z| \leq d + 1$. Finally, van Dam, Hallgren, and Ip \cite{dhi06} showed that $\Qoip(f) = 1$. Applying \cref{thm:main} completes the proof.
	\end{proof}
	
	\paragraph{Polynomials and perfect squares.} The $\mathsf{PS}'$ function that appears in the proof of \cref{thm:ps} is a special case of the following construction studied by Russell and Shparlinski \cite{rs04}. For $\mathcal{C} \subseteq \F_q[y]$, define $f_{\mathcal{C}}: \mathcal{C} \times \F_q \to \{-1, 0, 1\}$ by $f_{\mathcal{C}}(p, y) = \chi(p(y))$, where $\chi$ is defined by \cref{eqn:chi}. (So $\mathsf{PS}'$ is the case $\mathcal{C} = \{x + y : x \in \F_q\}$.) The following theorem is similar to but incomparable with \cref{thm:ps}, since it assumes that $q$ is prime:
	\begin{theorem}
		Assume $q$ is an odd prime. Suppose $\mathcal{C}$ is a set of monic, square-free polynomials of degree at most $d$. Then
		\[
		\Qcc(f_{\mathcal{C}}) \geq \Omega\left(\frac{\log |\mathcal{C}|}{d \log (d + 1)}\right).
		\]
	\end{theorem}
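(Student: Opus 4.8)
The plan is to invoke \cref{thm:main} with $f = f_{\mathcal C}$, treating the hidden polynomial $p \in \mathcal C$ as the unknown input. Since $f_{\mathcal C}$ takes values in $\{-1, 0, 1\}$, we have $|Z| = 3$, so the term $\log |Z|$ in \cref{thm:main} is an additive constant, and \cref{thm:main} reads
\[
	\Qcc(f_{\mathcal C}) \cdot \Qoip(f_{\mathcal C}) \cdot \bigl(\log \Qoip(f_{\mathcal C}) + O(1)\bigr) \geq \Omega(\log |\mathcal C|).
\]
Everything then comes down to the query upper bound $\Qoip(f_{\mathcal C}) \le O(d)$: granting it, $\log \Qoip(f_{\mathcal C}) + O(1) \le O(\log(d + 1))$, and rearranging gives $\Qcc(f_{\mathcal C}) \ge \Omega\bigl(\log|\mathcal C| / (d\log(d+1))\bigr)$, as claimed. (If $\Qoip(f_{\mathcal C}) = \infty$, i.e.\ two distinct polynomials in $\mathcal C$ induce the same oracle, the asserted inequality is vacuous by the conventions already invoked in \cref{thm:main}.)

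The query bound is essentially the content of the reconstruction algorithm of Russell and Shparlinski \cite{rs04}: given oracle access to $y \mapsto \chi(p(y))$ for an unknown monic, square-free $p \in \F_q[y]$ of degree at most $d$, one can identify $p$ with constant success probability using $O(d)$ quantum queries. Restricting the hidden polynomial to lie in a fixed subfamily $\mathcal C$ of the monic square-free degree-$\le d$ polynomials only makes identification easier --- the oracle $U_{f_p}$ acts on the same spaces $\C^q \otimes \C^3$ regardless of which family $p$ is drawn from, so any algorithm that identifies $p$ among all such polynomials also identifies it within $\mathcal C$ --- whence $\Qoip(f_{\mathcal C}) \le O(d)$. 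This generalizes van Dam's degree-$1$ algorithm \cite{dam02} used for \cref{thm:ps}. At a high level, the Russell--Shparlinski algorithm spends its $O(d)$ queries building quantum states out of the Legendre-symbol sequence $(\chi(p(y)))_{y \in \F_q}$ and its additive twists, and the analysis relies on the Weil bound $\bigl|\sum_{y \in \F_q} \chi(p(y))\bigr| \le (d - 1)\sqrt q$ (together with its analogues involving additive characters) to show that these states retain enough information to pin down $p$; the hypothesis that $q$ is prime enters because the relevant Fourier analysis takes place over $\Z / q\Z$.

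The only genuinely new ingredient over \cref{thm:ps} is thus the $O(d)$-query reconstruction algorithm, which I would simply cite from \cite{rs04} rather than reprove; this is also where the real difficulty lies, since the analysis of that algorithm is where the Weil bounds do the work. The remaining steps --- noting $|Z| = 3$, checking that passing to a subfamily cannot increase the OIP complexity, applying \cref{thm:main}, and rearranging --- are routine and parallel the proof of \cref{thm:ps}. The degenerate regimes (oracle collisions forcing $\Qoip(f_{\mathcal C}) = \infty$, or $q$ so small relative to $d$ that $\log|\mathcal C|$ makes the right-hand side trivial) require no separate argument, since in each such case the inequality holds vacuously or trivially.
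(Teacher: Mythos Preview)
Your proposal is correct and follows essentially the same approach as the paper: cite Russell and Shparlinski \cite{rs04} for the bound $\Qoip(f_{\mathcal C}) \le O(d)$, then apply \cref{thm:main} with $|Z|=3$ and rearrange. The paper's own proof is a two-sentence version of exactly this argument.
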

	
	\begin{proof}
		Russell and Shparlinski showed under the specified assumptions \cite{rs04} that $\Qoip(f_{\mathcal{C}}) \leq O(d)$. The theorem follows immediately by applying \cref{thm:main}.
	\end{proof}
	
	\subsection{Ordered search} \label{sec:gt}
	
	\subsubsection{Background}
	Fix $N \in \N$. In the ordered search problem, there is an unknown number $x \in [N]$. An oracle answers queries of the form ``Is $y < x$?'' for a specified $y \in [N]$. The goal is to find $x$. In other words, ordered search is the oracle identification problem associated with the function $\mathsf{GT}: [N] \times [N] \to \{0, 1\}$ defined by
	\begin{equation} \label{eqn:gt-def}
	\mathsf{GT}(x, y) = \begin{cases}
	1 & \text{if } x > y \\
	0 & \text{if } x \leq y.
	\end{cases}
	\end{equation}
	
	The deterministic complexity of this problem is $\lceil \log N \rceil$ by binary search. The first lower bound on the \emph{quantum} complexity of ordered search was by Buhrman and de Wolf \cite{bw99}, who showed that $\Qoip(\mathsf{GT}) \geq \Omega(\sqrt{\log N} / \log \log N)$. The bound was subsequently improved by Farhi et al. \cite{fggs98a} to $\log N / (2 \log \log N)$, and then further improved by Ambainis \cite{amb99} to $\Omega(\log N)$. This bound is of course asymptotically optimal, but the leading constant is still interesting. H{\o}yer et al. \cite{hns02} showed that $\Qoip_0(\mathsf{GT}) \geq \frac{1}{\pi} \ln N - O(1)$, which is currently the best known lower bound for \emph{exact} algorithms. In the case $\epsilon > 0$, Ben-Or and Hassidim \cite{boh07} showed that $\Qoip_{\epsilon}(\mathsf{GT}) \geq \frac{1 - \epsilon}{\pi} \ln N - O(\epsilon)$, which has better $\epsilon$ dependence than the bound by H{\o}yer et al. A different line of work \cite{fggs99, hns02, jlb05, clp07, boh07} has investigated \emph{upper} bounds on $\Qoip(\mathsf{GT})$.
	
	\subsubsection{The new lower bound argument}
	In this section, we give a new proof that $\Qoip(\mathsf{GT}) \geq \Omega(\log N / \log \log N)$. Our proof is much simpler than previous arguments, and it gives a new perspective on the complexity of quantum ordered search.
	\begin{theorem} \label{thm:gt}
		$\Qoip(\mathsf{GT}) \geq \Omega(\log N / \log \log N)$.
	\end{theorem}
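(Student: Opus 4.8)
The plan is to feed $f = \mathsf{GT}$ into the bounded-error tradeoff of \cref{sec:nonzero-error}; here $|X| = N$ and $|Z| = 2$. Applying \cref{thm:main} as a black box is not quite enough. It gives $\Qcc(\mathsf{GT}) \cdot \Qoip(\mathsf{GT}) \cdot (\log \Qoip(\mathsf{GT}) + 1) \geq \Omega(\log N)$, and since the standard public-coin protocol gives $\Qcc(\mathsf{GT}) \leq O(\log \log N)$ while classical binary search gives $\Qoip(\mathsf{GT}) \leq \lceil \log N \rceil$ (so $\log \Qoip(\mathsf{GT}) \leq \log\log N + O(1)$), this only yields $\Qoip(\mathsf{GT}) \geq \Omega(\log N / (\log\log N)^2)$ --- one factor of $\log\log N$ too weak. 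The culprit is the amplification step inside the proof of \cref{thm:distributional-tradeoff}. To avoid paying for it, I would instead invoke \cref{lem:distributional-tradeoff-before-amplification} directly, with $\mu$ the uniform distribution on $[N]$ (so $H_\text{max}^{1/2}(\mu) = \log N - 1$), and supply by hand a sharp bound on $\Qcc_\epsilon(\mathsf{GT})$ for the particular small $\epsilon$ that the lemma returns.

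In more detail: \cref{lem:distributional-tradeoff-before-amplification} produces some $\epsilon \geq \Omega(|Z|^{-4} \Qoip_{\mu, 1/3}(\mathsf{GT})^{-2}) = \Omega(\Qoip_{\mu, 1/3}(\mathsf{GT})^{-2})$ with $\Qcc_\epsilon(\mathsf{GT}) \cdot \Qoip_{\mu, 1/3}(\mathsf{GT}) \geq \tfrac12(\log N - 1)$. Since $\Qoip_{\mu, 1/3}(\mathsf{GT}) \leq \Qoip(\mathsf{GT}) \leq \lceil \log N \rceil$, this forces $1/\epsilon \leq O((\log N)^2)$, hence $\log(1/\epsilon) \leq O(\log\log N)$. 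Now I claim that $\Qcc_\epsilon(\mathsf{GT}) \leq O(\log\log N + \log(1/\epsilon))$ for every $\epsilon > 0$. Granting the claim, the $\epsilon$ above satisfies $\Qcc_\epsilon(\mathsf{GT}) \leq O(\log\log N)$, so $\Qoip_{\mu, 1/3}(\mathsf{GT}) \geq \Omega(\log N / \Qcc_\epsilon(\mathsf{GT})) = \Omega(\log N / \log\log N)$, and the theorem follows because $\Qoip(\mathsf{GT}) \geq \Qoip_{\mu, 1/3}(\mathsf{GT})$.

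The only real obstacle is that claim --- more precisely, arranging for $\log\log N$ and $\log(1/\epsilon)$ to appear as a \emph{sum} rather than a product. A public-coin classical protocol can be simulated coherently on an entanglement-assisted qubit channel, using shared entangled pairs as the common random string and deferring the output measurement to the end; so $\Qcc_\epsilon(\mathsf{GT}) \leq O(R^{\text{pub}}_\epsilon(\mathsf{GT}))$, and it suffices to bound the public-coin randomized communication complexity of $\mathsf{GT}$ with error $\epsilon$ by $O(\log\log N + \log(1/\epsilon))$. The naive protocol --- binary search over the $\leq \log\log N$ bit positions of $x$ and $y$ for the most significant disagreement, running one equality test per step with error $\epsilon / \log\log N$ --- gives only the product bound $O(\log\log N \cdot \log(\log\log N / \epsilon))$. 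To collapse it, I would treat each equality test as a single noisy comparison (implementable with $O(1)$ communicated bits at constant error) and run noisy binary search, which locates the correct branch after $O(\log\log N + \log(1/\epsilon))$ noisy comparisons with total error $\epsilon$. This is the step I expect to require the most care. I would also remark that the argument in fact proves the stronger statement $\Qoip_{\mu, 1/3}(\mathsf{GT}) \geq \Omega(\log N / \log\log N)$ for $\mu$ uniform on $[N]$.
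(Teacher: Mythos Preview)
Your proposal is correct and essentially identical to the paper's proof: both invoke \cref{lem:distributional-tradeoff-before-amplification} with $\mu$ uniform on $[N]$ and combine it with the bound $\Qcc_\epsilon(\mathsf{GT}) \leq O(\log\log N + \log(1/\epsilon))$, then observe that $\log(1/\epsilon) \leq O(\log\log N)$ since $\Qoip(\mathsf{GT}) \leq \lceil \log N \rceil$. The only cosmetic differences are that the paper attributes the $O(\log\log N + \log(1/\epsilon))$ public-coin protocol to Nisan \cite{nis93} rather than rederiving it via noisy binary search, and wraps the final arithmetic in a short contradiction argument instead of appealing directly to the binary-search upper bound.
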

	
	\begin{proof}
		Nisan \cite{nis93} gave a classical $\epsilon$-error public-coin protocol for $\mathsf{GT}$ in which Alice and Bob exchange $O(\log \log N + \log(1/\epsilon))$ bits. In particular, this implies that $\Qcc_{\epsilon}(\mathsf{GT}) \leq O(\log \log N + \log(1/\epsilon))$. Let $\epsilon$ be the value in \cref{lem:distributional-tradeoff-before-amplification}. Then \cref{lem:distributional-tradeoff-before-amplification} implies that there is some constant $c \in (0, 2)$ (independent of everything) such that
		\begin{equation} \label{eqn:gt}
		(\log \log N + \log \Qoip(\mathsf{GT})) \cdot \Qoip(\mathsf{GT}) \geq c \cdot \log N.
		\end{equation}
		Assume for a contradiction that
		\begin{equation} \label{eqn:gt-cont}
		\Qoip(\mathsf{GT}) < \frac{(c/2) \cdot \log N}{\log \log N}.
		\end{equation}
		The denominator of the right-hand side of \cref{eqn:gt-cont} is at least $1$ (for sufficiently large $N$) and $c/2 < 1$, so $\Qoip(\mathsf{GT}) < \log N$. Therefore, $\log \Qoip(\mathsf{GT}) < \log \log N$. By \cref{eqn:gt}, this contradicts \cref{eqn:gt-cont}.
	\end{proof}
	
	One might hope to prove a better lower bound on $\Qoip(\mathsf{GT})$ by using a better quantum communication protocol for $\mathsf{GT}$. Alas, this is not possible: Braverman and Weinstein \cite{bw16} showed that there is a distribution with respect to which the discrepancy of $\mathsf{GT}$ is $O(1/\sqrt{\log N})$, and Linial and Shraibman \cite{ls09b} showed that the discrepancy method applies to $\Qcc$, so $\Qcc(\mathsf{GT}) \geq \Omega(\log \log N)$.
	
	\subsubsection{Distributional complexity}
	Since our tradeoff theorem extends to \emph{distributional} query complexity, the proof of \cref{thm:gt} can be extended to show that $\Qoip_{\mu, 1/3}(\mathsf{GT}) \geq \Omega(H_\text{max}^{1/2}(\mu) / \log \log N)$ for any distribution $\mu$ on $[N]$. At first glance, this seems like a new, interesting lower bound. But in fact, a superior bound of $\Omega(H_\text{max}^{1/2}(\mu))$ follows easily from Ambainis's result that $\Qoip(\mathsf{GT}) \geq \Omega(\log N)$ because of the special structure of the $\mathsf{GT}$ function. To show this, we need the following elementary fact: if $f: X \to \{0, 1\}$ is a function and $U_f$ is the oracle $U_f \ket{x} \ket{a} = \ket{x} \ket{a \oplus f(x)}$, then a \emph{controlled} $U_f$ operator can be implemented using two queries to $U_f$, a Toffoli gate, and one ancilla qubit (initially $\ket{0}$ and restored to $\ket{0}$). For completeness, we give the circuit in \cref{fig:control}.
	
	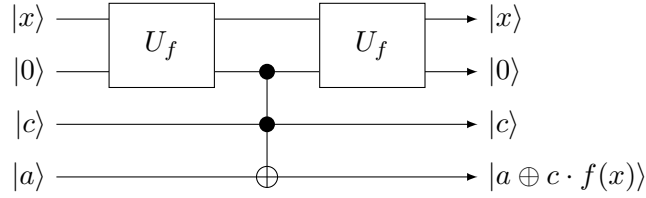
\begin{figure}
		\centering
		\begin{tikzpicture}[scale=0.7]
		\node[anchor=east] (a) at (0, 0) {$\ket{x}$};
		\node[anchor=east] (b) at (0, -1) {$\ket{0}$};
		\node[anchor=east] (c) at (0, -2) {$\ket{c}$};
		\node[anchor=east] (d) at (0, -3) {$\ket{a}$};
		
		\node[anchor=west] (a') at (8, 0) {$\ket{x}$};
		\node[anchor=west] (b') at (8, -1) {$\ket{0}$};
		\node[anchor=west] (c') at (8, -2) {$\ket{c}$};
		\node[anchor=west] (d') at (8, -3) {$\ket{a \oplus c \cdot f(x)}$};
		
		\draw[->, >=latex] (a) -- (a');
		\draw[->, >=latex] (b) -- (b');
		\draw[->, >=latex] (c) -- (c');
		\draw[->, >=latex] (d) -- (d');
		
		\filldraw[draw=black, fill=white] (1, 0.3) rectangle node {$U_f$} (3, -1.3);
		
		\fill (4, -1) circle (0.15);
		\fill (4, -2) circle (0.15);
		\draw (4, -3) circle (0.2);
		\draw (4, -1) -- (4, -3.2);
		
		\filldraw[draw=black, fill=white] (5, 0.3) rectangle node {$U_f$} (7, -1.3);
		\end{tikzpicture}
		\caption{An implementation of controlled-$U_f$. The third register, initialized $\ket{c}$, is the control.} \label{fig:control}
	\end{figure}
	
	\begin{prop} \label{prop:gt}
		For any distribution $\mu$ on $[N]$, $\Qoip_{\mu, 1/3}(\mathsf{GT}) \geq \Omega(H_\text{max}^{1/2}(\mu))$.
	\end{prop}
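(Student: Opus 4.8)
The plan is to reduce, in three moves, from a good distributional algorithm for $\mathsf{GT}$ on $[N]$ under $\mu$ to a good \emph{worst-case} ordered-search algorithm on a universe of size $2^{\Omega(H_\text{max}^{1/2}(\mu))}$, and then invoke Ambainis's bound $\Qoip(\mathsf{GT}) \geq \Omega(\log N)$ quoted in \cref{sec:gt}. Let $A$ be an optimal algorithm witnessing $q := \Qoip_{\mu,1/3}(\mathsf{GT})$, and let $p_x$ be its success probability on input $x$, so $\E_{x \sim \mu}[p_x] \geq 2/3$. The first move is to extract a large ``good'' set: putting $G = \{x \in [N] : p_x \geq 1/3\}$, a one-line averaging computation gives $\mu(G) \geq 1/2$, and hence $\log_2 |G| \geq H_\text{max}^{1/2}(\mu)$ by the characterization of smooth max-entropy. (If $|G|$ is bounded there is nothing to prove.) So it suffices to show $q \geq \Omega(\log |G|)$.

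The second move is to amplify $A$ to constant worst-case success \emph{on every $x \in G$}. A candidate answer $x'$ can be checked against the oracle with $O(1)$ queries, since $x = x'$ iff $\mathsf{GT}(x, x' - 1) = 1$ and $\mathsf{GT}(x, x') = 0$ (with trivial modifications when $x' \in \{1, N\}$); moreover this test can be performed coherently, as a phase conditioned on the test passing, using the controlled-$U_{f_x}$ gadget of \cref{fig:control} followed by an uncompute. Using this as the reflection about the ``good'' subspace, I would run fixed-point amplitude amplification on $A$: $O(1)$ rounds, each costing one call to $A$ or $A^{-1}$ (which is fine since $\mathsf{GT}$ is boolean, so $U_{f_x}$ is self-inverse) plus $O(1)$ extra queries, turn any input with $p_x \geq 1/3$ into one identified with probability $\geq 2/3$. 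This produces an algorithm $A^+$ making $O(q)$ queries that identifies $x$ with probability $\geq 2/3$ for \emph{every} $x \in G$.

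The third move is to observe that $A^+$ restricted to inputs in $G$ is simply an ordered-search algorithm on the linearly ordered set $G$. Fix the order isomorphism $G = \{g_1 < \cdots < g_m\} \cong [m]$ with $m = |G|$. For the hidden index $i$ (i.e. hidden element $g_i$), any query $\mathsf{GT}(g_i, y)$ with $y \in [N]$ equals $[i > j]$, where $j = |\{k : g_k \leq y\}| \in \{0, \dots, m\}$; this is answerable for free when $j \in \{0, m\}$ and from a single ``is $i > j$?'' query to the ordered-search oracle on $[m]$ otherwise, and recovering $g_i$ recovers $i$. Thus ordered search on $[m]$ has a worst-case $O(q)$-query, $2/3$-success algorithm, so Ambainis's $\Omega(\log m)$ bound forces $q \geq \Omega(\log m) = \Omega(\log |G|) \geq \Omega(H_\text{max}^{1/2}(\mu))$, as desired.

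The main obstacle is the amplification step. One must verify (a) that checking a candidate answer really is a coherent $O(1)$-query operation that can play the role of the good-subspace reflection inside amplitude amplification — this is precisely where controlled-$U_{f_x}$ is needed, since a bare query writes $\mathsf{GT}(x,y)$ into an ancilla but the reflection requires a phase controlled on that outcome — and (b) that a \emph{single} choice of the number of amplification rounds works simultaneously for all $x \in G$ despite their differing amplitudes $\sqrt{p_x} \in [1/\sqrt{3}, 1]$, which is exactly what fixed-point amplitude amplification guarantees. Everything else — the averaging that yields $G$, the order-isomorphism bookkeeping, and the boundary cases $x' \in \{1, N\}$ and $|G|$ small — is routine.
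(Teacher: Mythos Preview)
Your proof is correct and follows the same three-step skeleton as the paper's: (1) extract by Markov a set $G$ of $\mu$-mass at least $1/2$ on which the distributional algorithm succeeds with probability $\geq 1/3$, so $|G| \geq 2^{H_\text{max}^{1/2}(\mu)}$; (2) turn the algorithm into a worst-case ordered-search algorithm on $G \cong [m]$ by simulating each $[N]$-oracle query from an $[m]$-oracle query via the order isomorphism (the paper uses exactly your map $y \mapsto |\{k : g_k \leq y\}|$, packaged as a unitary $V$ together with the controlled-$U_f$ gadget of \cref{fig:control}); (3) invoke Ambainis's $\Omega(\log m)$ lower bound.

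The one genuine difference is how the success-probability gap is bridged. You amplify the algorithm from success $\geq 1/3$ to success $\geq 2/3$ on every $x \in G$ using fixed-point amplitude amplification with a coherent answer-verification reflection. The paper skips this entirely: it simply observes that the original algorithm already has worst-case error $\leq 2/3$ on $G$, so the reduction yields $\Qoip_{2/3}(\mathsf{GT}|_{[m]}) \leq 2q$, and then notes that Ambainis's bound extends from error $1/3$ to error $2/3$ by standard majority amplification on the lower-bound side. Your route is valid, but the paper's is shorter and sidesteps the fixed-point machinery and the coherent-verification details you flag as the ``main obstacle.''
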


	\begin{proof}
		Fix some $\Qoip_{\mu, 1/3}(\mathsf{GT})$ algorithm. Let $\epsilon_x$ be the failure probability of the algorithm in the case that the correct output is $x \in [N]$. Then $\E_{x \sim \mu}[\epsilon_x] \leq 1/3$. By Markov's inequality, $\Pr_{x \sim \mu}[\epsilon_x > 2/3] \leq 1/2$. Let $S = \{x : \epsilon_x \leq 2/3\}$, so that $|S| \geq 2^{H_\text{max}^{1/2}(\mu)}$. Write $S = \{s_1, s_2, \dots, s_{N'}\}$, where $s_i < s_{i + 1}$.
		
		Let $\mathsf{GT}'$ be the restriction of $\mathsf{GT}$ to $[N'] \times [N']$. We will show that $\Qoip_{2/3}(\mathsf{GT}') \leq 2 \Qoip_{\mu, 1/3}(\mathsf{GT})$. The idea is simple: ordered search of $[N']$ is equivalent (by relabeling) to ordered search of $S$, which is achieved by the $\Qoip_{\mu, 1/3}(\mathsf{GT})$ algorithm. We give details for completeness.
		
		Define $A: [N] \to \{0, 1\}$ by
		\[
		A(y) = \begin{cases}
		1 & \text{if } s_1 \leq y \\
		0 & \text{otherwise.}
		\end{cases}
		\]
		Define $B: [N] \to [N']$ by
		\[
			B(y) = \begin{cases}
				\max\{i: s_i \leq y\} & \text{if } s_1 \leq y \\
				1 & \text{otherwise.}
			\end{cases}
		\]
		Let $V$ be the unitary operation on $\C^N \otimes \C^2 \otimes \C^{N'}$ defined by
		\begin{align*}
			V\ket{y} \ket{a} \ket{b} &= \ket{y} \ket{a \oplus A(y)} \ket{b + B(y) \text{ mod } N'}.
		\end{align*}
		The algorithm for ordered search of $[N']$: Run the $\Qoip_{\mu, 1/3}(\mathsf{GT})$ algorithm, but when it tries to make a query, instead apply the circuit in \cref{fig:gt}. When it outputs a value $\hat{x}$, if $\hat{x} = s_j$ for some $j$, output $j$.
		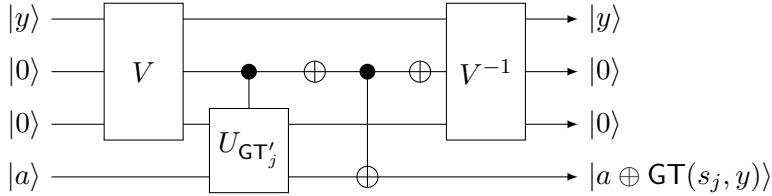
\begin{figure}
			\centering
			\begin{tikzpicture}[scale=0.7]
			\node[anchor=east] (a) at (0, 0) {$\ket{y}$};
			\node[anchor=east] (b) at (0, -1) {$\ket{0}$};
			\node[anchor=east] (c) at (0, -2) {$\ket{0}$};
			\node[anchor=east] (d) at (0, -3) {$\ket{a}$};
			
			\node[anchor=west] (a') at (10, 0) {$\ket{y}$};
			\node[anchor=west] (b') at (10, -1) {$\ket{0}$};
			\node[anchor=west] (c') at (10, -2) {$\ket{0}$};
			\node[anchor=west] (d') at (10, -3) {$\ket{a \oplus \mathsf{GT}(s_j, y)}$};
			
			\draw[->, >=latex] (a) -- (a');
			\draw[->, >=latex] (b) -- (b');
			\draw[->, >=latex] (c) -- (c');
			\draw[->, >=latex] (d) -- (d');
			
			\filldraw[draw=black, fill=white] (1, 0.3) rectangle node {$V$} (2.5, -2.3);
			
			\filldraw[draw=black, fill=white] (3, -1.7) rectangle node {$U_{\mathsf{GT}'_j}$} (4.5, -3.3);
			
			\filldraw[draw=black, fill=white] (7.5, 0.3) rectangle node {$V^{-1}$} (9, -2.3);
			
			\fill (3.75, -1) circle (0.15);
			\fill (6, -1) circle (0.15);
			\draw (5, -1) circle (0.2);
			\draw (6, -3) circle (0.2);
			\draw (7, -1) circle (0.2);
			\draw (3.75, -1) -- (3.75, -1.7);
			\draw (5, -0.8) -- (5, -1.2);
			\draw (6, -1) -- (6, -3.2);
			\draw (7, -0.8) -- (7, -1.2);
			
			\end{tikzpicture}
			\caption{Simulating a query for the ordered search of $[N]$ using an oracle for the ordered search of $[N']$. Here, $U_{\mathsf{GT}'_j}$ is the oracle for ordered search of the unknown value $j \in [N']$, i.e. $U_{\mathsf{GT}'_j} \ket{i} \ket{a} = \ket{i} \ket{a \oplus \mathsf{GT}'(j, i)}$.} \label{fig:gt}
		\end{figure}
		
		By the construction of $V$, the circuit in \cref{fig:gt} simulates an oracle for searching for $s_j \in [N]$ when the true oracle is for searching for $j \in [N']$. Therefore, with probability at least $1 - \epsilon_{s_j}$, i.e. with probability at least $1/3$, the simulation of the $\Qoip_{\mu, 1/3}(\mathsf{GT})$ algorithm will output $s_j$.
		
		Two queries suffice to implement the circuit of \cref{fig:gt}, so $\Qoip_{2/3}(\mathsf{GT}') \leq 2\Qoip_{\mu, 1/3}(\mathsf{GT})$. Ambainis showed \cite{amb99} that $\Qoip(\mathsf{GT}') \geq \Omega(\log N')$, and hence (by amplification) $\Qoip_{2/3}(\mathsf{GT}') \geq \Omega(\log N') \geq \Omega(H_\text{max}^{1/2}(\mu))$. Therefore, $\Qoip_{\mu, 1/3}(\mathsf{GT}_N) \geq \Omega(H_\text{max}^{1/2}(\mu))$.
	\end{proof}

	Note that the $\epsilon$-error \emph{classical} distributional complexity of ordered search is exactly $\lceil H_\text{max}^{\epsilon}(\mu) \rceil$.
	
	\section{Near-optimality of our tradeoff theorems} \label{sec:optimality}
	\subsection{The exact case}
	In this section, we show that \cref{thm:zero-error} is close to the best possible tradeoff in terms of only $\Qcc_0(f)$, $\Qoip_0(f)$, and $|X|$. There are examples, such as $\mathsf{IP}$, where \cref{thm:zero-error} is exactly tight \cite{cdnt13}. But the point is that we are giving a \emph{family} of examples where \cref{thm:zero-error} is close to tight and $\Qoip_0(f)$ takes on \emph{all possible values} up to $\log |X|$:
	
	\begin{theorem} \label{thm:zero-error-optimality}
		For any finite set $X$ and any positive integer $q \leq \log |X|$, there exists a finite set $Y$ and a function $f: X \times Y \to \{0, 1\}$ such that $\Qoip_0(f) = q$ and
		\[
			\Qoip_0(f) \cdot \Qcc_0(f) \leq \frac{1}{2} \log |X| +  O(\Qoip_0(f) \cdot \log \Qoip_0(f)).
		\]
	\end{theorem}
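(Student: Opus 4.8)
The plan is to exhibit $f$ as $q$ "independent" Bernstein--Vazirani instances of roughly equal size, and then read off the two upper bounds almost for free. Set $n=\lceil\log_2|X|\rceil$ and pick integers $c_1,\dots,c_q$, each within $1$ of $\lfloor|X|^{1/q}\rfloor\ge 2$ (the bound $q\le\log|X|$ forces $|X|^{1/q}\ge 2$), with $\prod_r c_r$ the least product of this form that is at least $|X|$; then $|X|\le\prod_r c_r<2|X|$. Identifying $[|X|]$ with the first $|X|$ tuples in lexicographic mixed-radix order, with the chunks ordered so that $c_1$ is largest, gives an injection $\iota\colon X\hookrightarrow\prod_{r=1}^q[c_r]$ whose complement consists only of grid points with first coordinate $c_1-1$; in particular $\iota(X)$ contains every grid point whose first coordinate is $0$ or $1$. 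Write $\iota(x)=(\iota(x)_1,\dots,\iota(x)_q)$, view $\iota(x)_r\in\F_2^{m_r}$ with $m_r=\lceil\log_2 c_r\rceil$, let $Y=\{(r,w):r\in[q],\,w\in\F_2^{m_r}\}$, and define $f(x,(r,w))=\langle\iota(x)_r,w\rangle\bmod 2$.

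The upper bounds are then immediate. Running Bernstein--Vazirani once on the oracle restricted to each fixed $r$ recovers $\iota(x)_r$ exactly, so $q$ queries recover $\iota(x)$ and hence $x$; thus $\Qoip_0(f)\le q$. For communication: Bob first sends the index $r$ ($\lceil\log_2 q\rceil$ qubits), Alice then sends $\iota(x)_r$ via superdense coding ($\lceil m_r/2\rceil\le\frac{1}{2q}\log_2|X|+O(1)$ qubits), and Bob computes the inner product locally; after the clean-up of \cref{lem:clean} this gives $\Qcc_0(f)\le\frac{1}{2q}\log_2|X|+O(\log q)$. Multiplying, $\Qoip_0(f)\cdot\Qcc_0(f)\le\frac12\log_2|X|+O(q\log q)$, and once $\Qoip_0(f)=q$ the error term is $O(\Qoip_0(f)\log\Qoip_0(f))$, which is the claimed inequality. (The degenerate case $q=1$ I would handle separately, e.g.\ with $\mathsf{IP}$ restricted to $X$.)

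What remains is the matching lower bound $\Qoip_0(f)\ge q$. Restricting each chunk to the two values $\{0,1\}\subseteq\F_2^{m_r}$ is legitimate, since by the density observation above $\iota(X)$ contains all $2^q$ of the corresponding grid points; on this restriction, querying $(r,e_{m_r})$ returns the $r$-th bit, so the oracle becomes exactly the standard bit-revealing oracle for an unknown string in $\F_2^q$. Since restricting the input set can only decrease $\Qoip_0$, it suffices to invoke the fact that zero-error identification of a $q$-bit string requires $q$ quantum queries --- precisely the regime where exact query complexity exceeds the $\approx q/2$ bounded-error cost of van Dam's oracle-interrogation algorithm. As a sanity check, for $q$ small relative to $\sqrt{\log|X|/\log\log|X|}$ the same lower bound falls out of \cref{thm:zero-error} applied to $f$ together with the $\Qcc_0$ bound above, since then $\frac{\log_2|X|}{2\,\Qcc_0(f)}>q-1$; but for the full range $q\le\log|X|$ one genuinely needs the direct $q$-bit fact.

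I expect the main obstacle to be the tension between the sizing constraint and the lower bound: because $|X|$ is arbitrary (possibly prime), $X$ cannot literally be a product, so the mixed-radix encoding must be chosen carefully enough that \emph{simultaneously} (i) the chunk sizes stay balanced, so the per-chunk superdense-coding cost is $\frac{1}{2q}\log_2|X|+O(1)$ and the chunk-index overhead is only $O(\log q)$, and (ii) the image $\iota(X)$ is dense enough to contain the full combinatorial cube $\{0,1\}^q$ that the $q$-bit lower bound is pulled back along. The delicate point is verifying that the ``least product $\ge|X|$ with near-$|X|^{1/q}$ factors, largest chunk first, lexicographic truncation'' recipe makes the number of omitted grid points strictly smaller than one first-coordinate slice, so that only points with first coordinate $c_1-1\ge 2$ are omitted; granting that, everything else is a routine arithmetic check.
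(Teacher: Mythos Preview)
Your approach is essentially the paper's: build $f$ as $q$ independent Bernstein--Vazirani instances, use BV for the $\Qoip_0$ upper bound, index-then-superdense-coding for the $\Qcc_0$ upper bound, and reduce the $\Qoip_0$ lower bound to exact learning of a hidden $q$-bit string. Two points of comparison. First, the paper's encoding is much simpler than your mixed-radix scheme: it embeds $X$ into $\{0,1\}^{nq}$ with $n=\lceil \tfrac{1}{q}\log|X|\rceil$ and only needs the embedding to contain all $(x_1,0^q,\dots,0^q)$ with $x_1\in\{0,1\}^q$, which is automatic from $2^q\le|X|$---so the ``delicate'' sizing issue you flag disappears entirely. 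Second, your lower-bound step is imprecise: after restricting to the cube, the oracle does \emph{not} become the standard bit-revealing oracle $|r,a\rangle\mapsto|r,a\oplus b_r\rangle$, but rather the \emph{controlled} version $|r,c,a\rangle\mapsto|r,c,a\oplus c\cdot b_r\rangle$ (the relevant bit of $w$ acts as the control, the rest of $w$ is inert). Since the controlled oracle is at least as powerful, a lower bound for the uncontrolled oracle does not transfer; this is exactly why the paper invokes \cref{lem:or} (Beals et al.), which is stated and proved for the controlled oracle.
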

	
	\begin{figure}
		\centering
		\begin{tikzpicture}[scale=12]
			\xdef\hs{50}
			\pgfmathparse{2*\hs}
			\xdef\s{\pgfmathresult}
			\pgfmathparse{2.0/\s}
			\xdef\b{\pgfmathresult}
			
			\draw[smooth, samples=200, domain=\b:1, thick] plot(\x, {\b / \x});
			\fill[smooth, samples=200, domain=\b:1, pattern=north west lines] plot(\x, {\b / \x}) -- (1, 0) -- (0, 0) -- (0, 1);
			
			\xdef\c{}
			
			\foreach \q in {1, ..., \hs}
			{
				\pgfmathparse{\b * \q}
				\xdef\x{\pgfmathresult}
				\pgfmathparse{ceil(\s / \q)}
				\xdef\n{\pgfmathresult}
				\pgfmathparse{ceil(\n/2) + ceil(log2(\q)/2)}
				\xdef\qcc{\pgfmathresult}
				\pgfmathparse{\qcc*\b}
				\xdef\y{\pgfmathresult}
				
				\node at (\x, \y) {$\bullet$};
				
				\pgfmathparse{ceil(\hs / \q)}
				\xdef\bound{\pgfmathresult}
				\pgfmathparse{\bound * \b}
				\xdef\boundy{\pgfmathresult}
				
				\node at (\x, \boundy) {$\times$};
				
			}
		
		
			\draw[thick, <->, >=latex] (0, 1.05) -- node[left] {$\Qcc_0$} (0, 0) -- node[below] {$\Qoip_0$} (1.05, 0);
			\draw (-0.02, 1) node[left] {$50$} -- (0, 1);
			\draw (-0.02, \b) node[left] {$1$} -- (0, \b);
			\draw (1, -0.02) node[below] {$50$} -- (1, 0);
			\draw (\b, -0.02) node[below] {$1$} -- (\b, 0);
		\end{tikzpicture}
		\caption{The tradeoff for exact protocols/algorithms when $X = \{0, 1\}^{100}$. \cref{thm:zero-error} implies that every function $f: X \times Y \to Z$ lies on or above the curve, i.e. outside the shaded region. For each value of $\Qoip_0(f)$, the $\times$ symbol marks the smallest integer value of $\Qcc_0(f)$ consistent with \cref{thm:zero-error}, and the $\bullet$ symbol marks the function constructed in the proof of \cref{thm:zero-error-optimality}.} \label{fig:tradeoff}
	\end{figure}
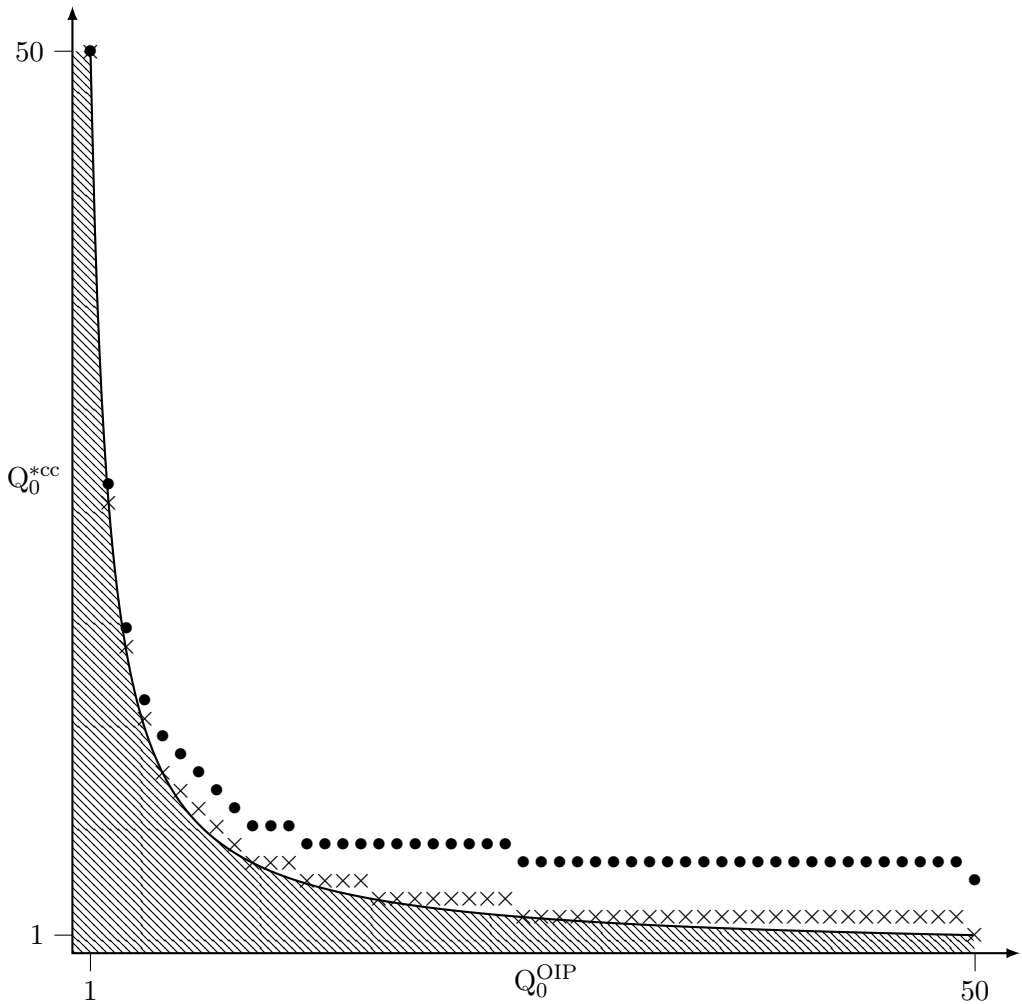
	
	The proof of \cref{thm:zero-error-optimality} is based on the fact that $\mathsf{IP}$ and $\mathsf{INDEX}$ approximately achieve the two extremes of \cref{thm:zero-error}; we interpolate by appropriately \emph{composing} $\mathsf{IP}$ with $\mathsf{INDEX}$. To lower bound $\Qoip_0(f)$, we use the following lemma by Beals et al. on the quantum query complexity of the $\mathsf{OR}$ function with a controlled\footnote{Formally, Beals et al. only proved the simpler version of \cref{lem:or} with a non-controlled oracle. But as they observed in a footnote, the proof generalizes to the controlled case.} oracle:
	\begin{lem}[{\cite{bbc+01}}] \label{lem:or}
		Suppose $h: [q] \to \{0, 1\}$ is an unknown function. Let $U$ be the unitary operator on $\C^q \otimes \C^2 \otimes \C^2$ defined by
		\[
			U \ket{j} \ket{c} \ket{a} = \ket{j} \ket{c} \ket{a \oplus c \cdot h(j)}.
		\]
		Then $q$ queries to $U$ are required to exactly compute $\mathsf{OR}(h(1), \dots, h(q))$.
	\end{lem}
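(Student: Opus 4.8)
The plan is to apply the polynomial method of Beals et al., but in a sharper form: instead of analyzing the \emph{acceptance probability} of the algorithm (whose degree is at most $2T$ after $T$ queries, which would only yield the bound $T \geq q/2$), I would analyze the \emph{overlap of the final state with the final state on the all-zeros input}, whose degree is only $T$. Forcing this degree-$\le T$ polynomial to equal a fixed degree-$q$ polynomial is what gives the tight bound $q$.

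Concretely, I would first record the standard degree lemma, in the controlled-oracle form alluded to by Beals et al.: if an algorithm makes $T$ queries to $U$, then for every computational basis state $\ket{z}$ of its workspace, the amplitude $\alpha_z(h)$ of $\ket{z}$ in the final state is a multilinear polynomial in the indeterminates $h(1), \dots, h(q)$ of degree at most $T$. This follows by induction on the number of queries: the input-independent unitaries between queries preserve degree, and one application of $U$ leaves every amplitude with control bit $0$ untouched and replaces the amplitude of $\ket{j}\ket{1}\ket{a}\ket{w}$ (for any workspace state $\ket{w}$) by $(1 - h(j))$ times its old value plus $h(j)$ times the old amplitude of $\ket{j}\ket{1}\ket{a \oplus 1}\ket{w}$; since $h(j)^2 = h(j)$, multilinearizing keeps everything multilinear and raises the degree by at most $1$.

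Next I would argue by contradiction. Suppose some algorithm exactly computes $\mathsf{OR}(h(1), \dots, h(q))$ using $T \leq q - 1$ queries. Let $\ket{\psi_h}$ be its final state on unknown function $h$, let $h_0 \equiv 0$, and set $g(h) = \braket{\psi_{h_0} | \psi_h} = \sum_z \overline{\alpha_z(h_0)}\,\alpha_z(h)$. As a fixed $\C$-linear combination of the amplitudes, $g$ is a multilinear polynomial in $h(1), \dots, h(q)$ of degree at most $T \leq q - 1$. By exactness, measuring the output register of $\ket{\psi_{h_0}}$ returns $\mathsf{OR}(h_0) = 0$ with certainty, so $\ket{\psi_{h_0}}$ is supported on basis states whose output register reads $0$; whereas for every nonzero $h \in \{0,1\}^q$ we have $\mathsf{OR}(h) = 1$, so $\ket{\psi_h}$ is supported on basis states whose output register reads $1$. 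These supports are orthogonal, so $g(h) = 0$ for all nonzero $h \in \{0,1\}^q$, while $g(h_0) = \|\ket{\psi_{h_0}}\|_2^2 = 1$.

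Finally, the unique multilinear polynomial taking these $2^q$ prescribed values on the Boolean cube is $\prod_{j=1}^{q}(1 - h(j))$, which has degree exactly $q$; this contradicts $\deg g \leq q - 1$, so in fact $T \geq q$. The only steps that require genuine care are the degree lemma for the \emph{controlled} oracle (a routine but not entirely trivial adaptation of the standard inductive argument) and the observation that one should look at the overlap $g(h)$, which is degree $\le T$, rather than the acceptance probability, which is only degree $\le 2T$; once that is in place, the remaining calculations are short. A minor point to state cleanly is that, by deferred measurement, it suffices to consider algorithms consisting of unitaries followed by a single final measurement, so that the final state $\ket{\psi_h}$ is well defined.
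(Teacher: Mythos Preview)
Your proof is correct. Note, however, that the paper does not supply its own proof of this lemma: it is quoted directly from Beals et al.\ \cite{bbc+01}, with a footnote observing that the controlled-oracle statement follows by the same argument as the standard one. Your route is very much in the spirit of their polynomial method---the fact that each \emph{amplitude} (not just the acceptance probability) has degree at most $T$ is already in \cite{bbc+01}---and your overlap functional $g(h) = \braket{\psi_{h_0}|\psi_h}$ is a clean way to isolate a single degree-$\le T$ polynomial that exactness forces to coincide with $\prod_{j=1}^{q}(1-h(j))$ on the whole Boolean cube, and hence to have degree exactly $q$. As you correctly emphasize, using the overlap rather than the acceptance probability is the key step here, since the latter would only yield $T \ge q/2$; your handling of the controlled oracle in the degree lemma is also fine, since on control bit $0$ the amplitudes are unchanged and on control bit $1$ each new amplitude is an affine combination, in the single variable $h(j)$, of two old amplitudes.
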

	
	(It will not matter for us that the $\mathsf{OR}$ function in particular appears in \cref{lem:or}. All that we will actually need is that $q$ queries are required to learn the entire function $h(\cdot)$.)

	\begin{proof}[Proof of \cref{thm:zero-error-optimality}]
		Define $n = \lceil \frac{1}{q} \log |X|\rceil$. Define $X_0 = \{0, 1\}^{nq}$ and $Y = [q] \times \{0, 1\}^n$. We first define a function $g: X_0 \times Y \to \{0, 1\}$. For any $x_1, \dots, x_n \in \{0, 1\}^q, j \in [q], y \in \{0, 1\}^n$, we define
		\[
			g((x_1, \dots, x_n), (j, y)) = \sum_{i = 1}^n y_i \cdot x_{ij} \pmod{2},
		\]
		where $x_{ij}$ denotes the $j$th bit of $x_i$. In other words, the output of $g$ is $\mathsf{IP}(w, y)$, where $w_i = \mathsf{INDEX}(x_i, j)$. Identify $X$ with some subset of $X_0$ of size $|X|$ such that $(x_1, 0^q, 0^q, \dots, 0^q) \in X$ for all $x_1 \in \{0, 1\}^q$. (This is possible because $q \leq \log |X|$.) Let $f$ be the restriction of $g$ to $X \times Y$.
		
		We first show that $\Qoip_0(f) \leq q$. In short, to compute $x$, for $j = 1$ to $q$, we learn $x_{ij}$ for every $i$ using a single query via the Bernstein-Vazirani algorithm. For completeness, we now give the algorithm in explicit detail.
		\begin{enumerate}
			\item For $j = 1$ to $q$:
			\begin{enumerate}
				\item Prepare the state
				\[
					\ket{\phi_0} = \frac{1}{\sqrt{2^{n + 1}}}\sum_{\substack{y \in \{0, 1\}^n \\ z \in \{0, 1\}}} (-1)^z \ket{j} \ket{y} \ket{z}.
				\]
				\item Query the $f_x$ oracle, giving the state
				\begin{align*}
					\ket{\phi_1} &= \frac{1}{\sqrt{2^{n + 1}}} \sum_{\substack{y \in \{0, 1\}^n \\ z \in \{0, 1\}}} (-1)^z \ket{j} \ket{y} \ket{z \oplus g((x_1, \dots, x_n), (j, y)} \\
					&= \frac{1}{\sqrt{2^{n + 1}}} \sum_{\substack{w \in \{0, 1\}^n \\ z' \in \{0, 1\}}} (-1)^{z' + g((x_1, \dots, x_n), (j, y))} \ket{j} \ket{y} \ket{z'} \\
					&= \frac{1}{\sqrt{2^{n + 1}}} \sum_{\substack{w \in \{0, 1\}^n \\ z' \in \{0, 1\}}} (-1)^{z' + \sum_{i = 1}^n y_i \cdot x_{ij}} \ket{j} \ket{y} \ket{z'}.
				\end{align*}
				\item Apply a Hadamard transformation to each of the last $n + 1$ qubits, giving the state
				\begin{align*}
					\ket{\phi_2} &= \ket{j} \ket{x_{1j}, \dots, x_{nj}} \ket{1}.
				\end{align*}
				\item Measure to learn $x_{1j}, \dots, x_{nj}$.
			\end{enumerate}
		\end{enumerate}
		Next, we show that $\Qoip_0(f) \geq q$ by a reduction from \cref{lem:or}. Fix some function $h: [q] \to \{0, 1\}$. Define
		\[
		x_{ij} = \begin{cases}
		h(j) & \text{if } i = 1 \\
		0 & \text{if } i \neq 1.
		\end{cases}
		\]
		By our definition of $X$, $x \in X$. A query to $f_x$ is equivalent to a query to the oracle $U$ in \cref{lem:or}, with $\ket{y_1}$ taking the place of the control $\ket{c}$. Therefore, $\Qoip_0(f)$ queries to $U$ suffice to learn $x$. Having learned $x$, we can output $\mathsf{OR}(x_{11}, \dots, x_{1q})$, which (by the definition of $x$) is exactly $\mathsf{OR}(h(1), \dots, h(q))$. Therefore, by \cref{lem:or}, $\Qoip_0(f) \geq q$.
		
		Next, we show that $\Qcc_0(f) \leq \lceil \frac{1}{2} n \rceil + \lceil \frac{1}{2} \log q \rceil$. By superdense coding \cite{bw92}, Bob sends $j$ to Alice using $\lceil \frac{1}{2} \log q \rceil$ qubits. By superdense coding again, Alice sends $x_{1j}, \dots, x_{nj}$ to Bob using $\lceil \frac{1}{2} n \rceil$ qubits. Finally, Bob outputs $\sum_{i = 1}^n y_i \cdot x_{ij} \pmod{2}$.
		
		Combining completes the proof:
		\begin{align*}
			\Qcc_0(f) \cdot \Qoip_0(f) &\leq \left(\frac{1}{2}n + \frac{1}{2} \log q + O(1)\right) \cdot q \\
			&\leq \frac{1}{2} nq + O(q \log q) \\
			&\leq \frac{1}{2} \log |X| + O(\Qoip_0(f) \cdot \log \Qoip_0(f)). \qedhere
		\end{align*}
	\end{proof}

	\subsection{The bounded-error case}
	A bounded-error version of \cref{thm:zero-error-optimality} is also true by a similar argument. To prove it, we use the following lemma by Farhi et al. and Beals et al. in place of \cref{lem:or}:
	\begin{lem}[{\cite{fggs98b, bbc+01}}] \label{lem:parity}
		Suppose $h: [q] \to \{0, 1\}$ is an unknown function. Let $U$ be the unitary operator on $\C^q \otimes \C^2 \otimes \C^2$ defined by
		\[
		U \ket{j} \ket{c} \ket{a} = \ket{j} \ket{c} \ket{a \oplus c \cdot h(j)}.
		\]
		Then at least $q/2$ queries to $U$ are required to compute $\mathsf{PARITY}(h(1), \dots, h(q))$ with failure probability at most $1/3$.
	\end{lem}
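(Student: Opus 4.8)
The plan is to prove \cref{lem:parity} by the polynomial method of Beals et al.\ \cite{bbc+01}, exactly paralleling the proof of \cref{lem:or}; as there, the control qubit turns out to cost nothing. The key observation is that, viewed as a matrix indexed by the computational basis of $\C^q \otimes \C^2 \otimes \C^2$, the query unitary $U$ has entries that are (multilinear) polynomials of degree at most $1$ in the unknown bits $h(1), \dots, h(q)$: the $(\ket{j'}\ket{c'}\ket{a'},\, \ket{j}\ket{c}\ket{a})$ entry is $\delta_{j,j'}\delta_{c,c'}\delta_{a',\, a \oplus c\cdot h(j)}$, and the factor $\delta_{a',\, a\oplus c\cdot h(j)}$ equals the constant $\delta_{a',a}$ when $c=0$ and the linear polynomial $(1-h(j))\delta_{a',a}+h(j)\delta_{a',a\oplus 1}$ when $c=1$. (This is precisely the footnote remark of Beals et al.)

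A $T$-query algorithm alternates $U$ (tensored with the identity on any workspace) with fixed unitaries independent of $h$. The fixed unitaries do not change the degree and each application of $U$ raises it by at most $1$, so an easy induction shows that after $T$ queries every amplitude of the global state is a polynomial of degree at most $T$ in $h(1),\dots,h(q)$. Hence the probability that the algorithm outputs $1$ — a sum over the ``accept'' basis states of squared moduli of amplitudes — is a real polynomial $P(h(1),\dots,h(q))$ of degree at most $2T$. If the algorithm computes $\mathsf{PARITY}(h(1),\dots,h(q))$ with failure probability at most $1/3$, then $|P(h) - \mathsf{PARITY}(h)| \le 1/3$ for every $h \in \{0,1\}^q$, i.e.\ $P$ is a uniform approximation to $\mathsf{PARITY}$ on the cube.

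It remains to invoke the (classical) fact that the approximate degree of $\mathsf{PARITY}$ on $q$ bits is $q$. Symmetrizing $P$ over the $q!$ permutations of its variables yields a symmetric polynomial of no larger degree, which restricts to a univariate polynomial $p$ of degree at most $2T$ with $p(k)$ within $1/3$ of $k \bmod 2$ for every integer $k \in \{0,1,\dots,q\}$. Thus $p(k) - 1/2$ is negative for even $k$ and positive for odd $k$, so it has a root in each of the $q$ open intervals $(k-1,k)$; since $p-1/2$ is not identically zero, $\deg p \ge q$, and therefore $2T \ge q$, i.e.\ $T \ge q/2$. I do not expect a real obstacle here, as the whole argument is standard; the one place to be slightly careful is the degree bookkeeping for the \emph{controlled} oracle in the first paragraph, namely confirming that conditioning on $c$ introduces no extra dependence on the $h(j)$'s — but this is immediate from the explicit matrix entries. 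It is also worth noting explicitly that allowing an $h$-independent workspace, deferred intermediate measurements, and arbitrary $h$-independent unitaries does not affect the degree-$2T$ conclusion.
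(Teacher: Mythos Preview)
The paper does not prove \cref{lem:parity}; it simply cites it as a known result of Farhi et al.\ and Beals et al., so there is no ``paper's own proof'' to compare against. Your argument is correct and is exactly the standard polynomial-method proof from \cite{bbc+01}: the controlled oracle still has matrix entries of degree at most $1$ in the hidden bits, so $T$ queries yield an acceptance probability of degree at most $2T$, and the Minsky--Papert symmetrization plus the sign-change count shows that any $1/3$-approximation of $\mathsf{PARITY}$ on $q$ bits must have degree at least $q$.
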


	\begin{theorem} \label{thm:nonzero-error-optimality}
		For any finite set $X$ and any positive integer $q \leq \log |X|$, there exists a finite set $Y$ and a function $f: X \times Y \to \{0, 1\}$ such that $q/2 \leq \Qoip(f) \leq q$ and
		\[
			\Qoip(f) \cdot \Qcc(f) \leq \frac{1}{2} \log |X| + O(\Qoip(f) \cdot \log \Qoip(f)).
		\]
	\end{theorem}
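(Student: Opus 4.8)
The plan is to reuse, essentially verbatim, the construction and the three estimates from the proof of \cref{thm:zero-error-optimality}, substituting the $\mathsf{PARITY}$ lower bound \cref{lem:parity} for the $\mathsf{OR}$ lower bound \cref{lem:or}. Concretely, set $n = \lceil\frac1q\log|X|\rceil$, let $X_0 = \{0,1\}^{nq}$ and $Y = [q]\times\{0,1\}^n$, and take $f$ to be the restriction to $X\times Y$ of the composed function $g((x_1,\dots,x_n),(j,y)) = \mathsf{IP}(w,y)$ with $w_i = \mathsf{INDEX}(x_i,j)$, where $X\subseteq X_0$ is chosen with $|X|$ elements so that $(x_1,0^q,\dots,0^q)\in X$ for every $x_1\in\{0,1\}^q$.

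The first observation is that the upper bound $\Qoip(f)\le q$ is free: the Bernstein--Vazirani-based algorithm from the proof of \cref{thm:zero-error-optimality} is \emph{exact}, hence a fortiori has failure probability at most $1/3$ using $q$ queries, so $\Qoip(f)=\Qoip_{1/3}(f)\le\Qoip_0(f)\le q$. For the lower bound $\Qoip(f)\ge q/2$, I would run the identical reduction: given an unknown $h:[q]\to\{0,1\}$, put $x_{1j}=h(j)$ and $x_{ij}=0$ for $i\neq1$ (so $x\in X$ by the choice of $X$), and note that one query to $f_x$ realizes the controlled oracle $U$ of \cref{lem:parity} with $\ket{y_1}$ serving as the control $\ket{c}$. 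A $\Qoip_{1/3}(f)$ algorithm run against this simulated oracle therefore learns $x$ --- and so outputs $\mathsf{PARITY}(x_{11},\dots,x_{1q})=\mathsf{PARITY}(h(1),\dots,h(q))$ --- with failure probability at most $1/3$, which by \cref{lem:parity} forces $\Qoip(f)\ge q/2$.

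Finally, the communication upper bound is inherited from the exact case: the superdense-coding protocol in the proof of \cref{thm:zero-error-optimality} shows $\Qcc_0(f)\le\lceil\frac12 n\rceil+\lceil\frac12\log q\rceil$, and $\Qcc(f)=\Qcc_{1/3}(f)\le\Qcc_0(f)$. Multiplying the bounds on $\Qoip(f)$ and $\Qcc(f)$ and simplifying exactly as in the last display of the proof of \cref{thm:zero-error-optimality} --- using $nq\le\log|X|+q$ together with $q\le2\Qoip(f)$ to fold the lower-order terms into $O(\Qoip(f)\log\Qoip(f))$ --- yields $\Qoip(f)\cdot\Qcc(f)\le\frac12\log|X|+O(\Qoip(f)\log\Qoip(f))$.

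I do not anticipate a real obstacle: the argument is a transcription of the exact-case proof. The only points that need a moment's attention are (i) that $\mathsf{PARITY}$ of the hidden string $h$ is recoverable once $x$ is known (immediate, since $x$ determines $h$ outright), and (ii) that the failure probability is propagated as exactly $1/3$ through the reduction, so that the hypothesis of \cref{lem:parity} is met without any need for amplification.
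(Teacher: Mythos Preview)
Your proposal is correct and follows exactly the paper's own approach: reuse the construction from \cref{thm:zero-error-optimality}, replace \cref{lem:or} by \cref{lem:parity} to get $\Qoip(f)\ge q/2$, and invoke the trivial inequalities $\Qoip(f)\le\Qoip_0(f)$ and $\Qcc(f)\le\Qcc_0(f)$ to carry over the upper bounds. Your write-up is in fact more detailed than the paper's proof sketch.
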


	\begin{proof}[Proof sketch]
		Use the same construction as in the proof of \cref{thm:zero-error-optimality}. Use \cref{lem:parity} instead of \cref{lem:or} for the lower bound on $\Qoip(f)$. The rest of the analysis goes through as before, because obviously $\Qoip(f) \leq \Qoip_0(f)$ and $\Qcc(f) \leq \Qcc_0(f)$.
	\end{proof}
	
	\subsection{Open problems}
	We leave open the problem of closing the gaps between \cref{thm:zero-error,thm:zero-error-optimality} and between \cref{thm:main,thm:nonzero-error-optimality}. In particular:
	\begin{itemize}
		\item Is the $\log \Qoip(f)$ term in \cref{thm:main} necessary? Is the $\log |Z|$ term necessary?
		\item Can \cref{thm:main} be \emph{significantly} strengthened in the regime $\Qcc(f) \leq o(\log \log |X|)$? For example, if $f: \{0, 1\}^n \times \{0, 1\}^n \to \{0, 1\}$ satisfies $\Qcc(f) \leq O(1)$, does that imply that $\Qoip(f) \geq \exp(\Omega(n))$?
	\end{itemize}
	We also leave it as an open question whether \cref{thm:distributional-tradeoff} (the distributional generalization of \cref{thm:main}) is near-optimal.
	
	\section{Acknowledgments}
	
	We thank Ronald de Wolf, Scott Aaronson, Alex Arkhipov, Jalex Stark, and Robin Kothari for helpful comments on a draft of this paper and for helpful discussions. We thank Thomas Vidick for pointing out that the entropy measure that appears in this paper has been studied before under the name ``smooth max-entropy''.

	\bibliographystyle{alpha}
	\bibliography{quantum-tradeoff}
\end{document}